\documentclass[journal, twocolumn,final]{IEEEtran}
\usepackage{amsfonts}
\usepackage{amssymb}
\usepackage{amsmath}
\usepackage{dsfont}
\usepackage{graphicx,subfigure}
\usepackage{enumitem}
\usepackage[normalem]{ulem}
\usepackage{bm}
\usepackage{cite}
\usepackage{bigstrut}
\usepackage{float}
\usepackage{balance}
\usepackage{lipsum}
\usepackage{psfrag}
\usepackage{xcolor}
\usepackage{xfrac}
\usepackage{hyperref}
\usepackage[ruled,vlined]{algorithm2e}
\usepackage{tikz}
\usepackage{pgfplots}
\pgfplotsset{compat=1.15}
\usepackage{mathrsfs}
\usepackage{diagbox}
\usetikzlibrary{arrows}
\usetikzlibrary{decorations.pathreplacing}
\usetikzlibrary{fadings}
\newtheorem{theorem}{Theorem}

\newtheorem{proposition}{Proposition}

\usepackage{mathtools}
\SetKwInput{KwInput}{Input}                
\SetKwInput{KwOutput}{Output}

\begin{document}
\IEEEoverridecommandlockouts
\newcommand\norm[1]{\left\lVert#1\right\rVert}
\newcolumntype{P}[1]{>{\centering\arraybackslash}p{#1}}

\title{Improving Reliability of Hybrid Bit-Semantic Communications for Cellular Networks}

\author{Nikos G. Evgenidis,~\IEEEmembership{Graduate Student Member,~IEEE}, Sotiris A. Tegos,~\IEEEmembership{Senior Member,~IEEE}, \\Panagiotis D. Diamantoulakis,~\IEEEmembership{Senior Member,~IEEE}, Ioannis Krikidis,~\IEEEmembership{Fellow,~IEEE}, \\
and George K. Karagiannidis,~\IEEEmembership{Fellow,~IEEE}
\thanks{N. G. Evgenidis, S. A. Tegos, P. D. Diamantoulakis, and G. K. Karagiannidis are with the Department of Electrical and Computer Engineering, Aristotle University of Thessaloniki, 54124 Thessaloniki, Greece (e-mails: nevgenid@ece.auth.gr, tegosoti@auth.gr, padiaman@auth.gr, geokarag@auth.gr).}
\thanks{I. Krikidis is with the Department of Electrical and Computer Engineering, University of Cyprus, 1678 Nicosia, Cyprus (e-mail: krikidis@ucy.ac.cy).}
\thanks{The work of N. G. Evgenidis was supported by the Hellenic Foundation for Research and Innovation (HFRI) under the 5th Call for HFRI PhD Fellowships (Fellowship Number: 21069).}

}
\maketitle
 
\begin{abstract}
Semantic communications (SemComs) have been considered as a promising solution to reduce the amount of transmitted information, thus paving the way for more energy-and spectrum-efficient wireless networks. Nevertheless, SemComs rely heavily on the utilization of deep neural networks (DNNs) at the transceivers, which limit the accuracy between the original and reconstructed data and are challenging to implement in practice due to increased architecture complexity. Thus, hybrid cellular networks that utilize both conventional bit communications (BitComs) and SemComs have been introduced to bridge the gap between required and existing infrastructure. To facilitate such networks, in this work, we investigate reliability by deriving closed-form expressions for the outage probability of the network. Additionally, we propose a generalized outage probability through which the cell radius that achieves a desired outage threshold for a specific range of users is calculated in closed form. Additionally, to consider the practical limitations caused by the specialized dedicated hardware and the increased memory and computational resources that are required to support SemCom, a semantic utilization metric is proposed. Based on this metric, we express the probability that a specific number of users select SemCom transmission and calculate the optimal cell radius for that number in closed form. Simulation results validate the derived analytical expressions and the characterized design properties of the cell radius found through the proposed metrics, providing useful insights.
\end{abstract}
\begin{IEEEkeywords}
semantic communications, bit communications, Shannon communications, outage probability, cellular networks
\end{IEEEkeywords}
\section{Introduction} \label{Sec:Intro}
Next-generation networks are envisioned to support many applications, most of which require large data transfer among end users and the backhaul network \cite{9955312,6G}. To satisfy such demands, the most common approach is to increase the capacity of the users, which has been extensively used in networks so far. However, this solution is not expected to be viable for next-generation networks since the increasing demand for bandwidth has led to the investigation of higher-spectrum technologies, which also face numerous challenges. In more detail, the severe path loss associated with these frequencies, combined with the more sophisticated hardware that should be utilized by the network users, makes the practical implementation of these technologies extremely difficult. With these in mind, a novel approach is required to improve the performance of future wireless networks. As such, a semantic-oriented paradigm has been proposed to enhance modern networks by leveraging the fact that many future applications are more concerned about the semantics of the data rather than the actual data themselves \cite{9955525}.

Recently, semantic communications (SemComs) have been considered as an attractive solution that can revolutionize communications by exploiting the underlying semantics of the transferred data. As identified by Shannon and Weaver in their seminal work \cite{6773024}, there are three levels of communication, level A, which addresses the technical problem of how accurately information is transmitted, level B, which addresses the semantic problem and refers to how informative a message is, and level C, which addresses the effectiveness problem. Based on this, SemComs can improve communication by integrating semantic information in the technical problem described by level A \cite{6773024}. This is possible by using common knowledge shared a priori between all parties in the form of knowledge bases to minimize the difference between the meaning of the transmitted messages and that of the recovered messages \cite{9955312,evgenidis2024multiple}. This approach can enhance reliability since an error at the bit level does not necessarily translate to an error at the semantic level, as explained in \cite{chafii2023scientific}. 

Although the semantic problem has been known for many decades, recent advancements in the field of deep neural networks (DNNs), such as natural language processing and image processing, have rekindled interest in it. In particular, by leveraging special DNN architectures that are mainly based on transformer modules, contextual relationships within texts and images can be identified and characterized, which in turn can be used to extract semantic information from the original data to be transmitted \cite{9955312}. Additionally, autoencoders have been shown to be a powerful tool for increasing the robustness of SemCom systems due to their ability to efficiently and effectively perform source and channel coding simultaneously \cite{9398576}. Therefore, artificial intelligence tools can extract semantic information from digital data and mitigate distortions caused by poor deep fading conditions.  

\subsection{Literature Review}
Recently, SemComs have attracted a lot of attention. In more details, basic ideas and challenges along with applications and performance metrics of semantic-aware systems were discussed in \cite{9679803,9955312,9955525}. In order to theoretically facilitate SemComs various approaches have been conducted from an information theoretic perspective \cite{https://doi.org/10.48550/arxiv.2201.01389, 6004632, 9814642}. Nevertheless, most works have focused on enabling a variety of applications by suggesting new DNN architectures to achieve accurate recovery of the original data after the reception and decoding of the transmitted data. Such applications are usually associated with speech and audio signals, image transfer and text contents. 

Regarding audio and speech signals, a novel DNN architecture, namely DeepSC-S, has been proposed \cite{9500590, 9450827} and its achievable signal-to-distortion was studied, while, in \cite{audioDiffuse}, a latent diffusion model that utilizes additional textual context is used to generate low-dimensional representations of the audio signals. On the other hand, semantic image transmission has been one of the most well-investigated case-studies, as it is crucial to enable many applications, including the internet of vehicles \cite{imageSeg1}. For this purpose, various architectures based on convolutional layers and visual transformers have been proposed. In more detail, in \cite{8683463}, DeepJSCC was introduced and shown to outperform conventional bit communications (BitComs) especially at low signal-to-noise ratio (SNR), while in \cite{10002903} a peak-to-average power ratio reduction technique was considered to allow DeepJSCC to be utilized in conjunction with waveforms such as orthogonal frequency division multiplexing (OFDM). In \cite{niu2023hybrid}, a hybrid system was considered, where part of an image is compressed and transmitted in the conventional way, while the rest is transmitted utilizing DeepJSCC and the receiver combines the two signals to achieve better perceptual similarity. To tackle the problem of varying fading channels, in \cite{DRL_InfoBottle}, a deep reinforcement learning approach combined with information bottleneck was considered to jointly minimize accuracy and transmission delay under such conditions. Additionally, in \cite{image2}, a convolutional based network implementing an additional adaptive channel attention module was considered to enhance peak SNR. In an attempt to better capture more complicated characteristics of images by considering the mean intersection over union, a novel semantic network that leverages perceptual-semantic loss was utilized to perform semantic segmentation over image data in \cite{imageIMoU}.

Except for audio and image contents, text transmission has also been explored in depth as it is common for many applications due to its frequent use in real life. In \cite{9398576}, a fundamental DNN architecture, called DeepSC, that leverages semantic encoding to reduce information and joint channel coding was proposed, while, in \cite{9763856}, the \textit{semantic rate} was introduced and a comparison between a pure SemCom and a pure BitCom system in terms of overall spectral efficiency was investigated. Aiming to extend the DeepSC architecture, in \cite{9885016}, a similar design was examined for serving two users, where the training loss function was modified to account for both users. In addition to this, a cooperative system with relay-forwarding was considered in \cite{cooperative_relay}, while a novel DNN architecture to reduce the amount of semantic information using semantic sentence forwarding was studied in \cite{semRelay}. In the same direction, a downlink and an uplink NOMA-based scheme were proposed in \cite{hybrid1} and \cite{SemNOMA}, respectively, to simultaneously serve two users with one utilizing SemCom and the other one BitCom, where the similarity was approximated though data regression to derive a tractable, yet accurate expression for the semantic rate. 

Nevertheless, regarding multiple users scenarios, the pre-existing infrastructure of BitCom networks along with potential limitations when SemComs are utilized have paved the way for hybrid BitCom-SemCom networks. Specifically, in \cite{meHybrid} the minimization of the sum delay of all users and the minimization of the maximum delay under practical similarity level thresholds were studied, where SemCom outperforms BitCom for mid-range SNR values. In the same direction, in \cite{hybrid2}, a machine learning framework was proposed to tackle the semantic throughput maximization and a fairness problem optimization for hybrid networks. Similarly, to enhance the weighted sum semantic rate, a hybrid network that utilizes a relay for semantic data transmission was explored in \cite{optTextAllocationRelay} and the optimal resource allocation for all users was found. 

\subsection{Motivation and Contribution}
Although multi-user hybrid BitCom-SemCom networks have recently received attention, gaps in the understanding of the performance of such systems remain. In \cite{urllc}, an ultra-reliable, low-latency communication system was considered, and the delay violation probability for delay-constrained users was analyzed. However, the outage probability of hybrid networks has not been investigated, even though it is crucial to facilitating reliable communication for all users. Additionally, most studies overlook the importance of similarity in achieving reliable data transfer, which is essential for applications requiring strict similarity levels. For instance, pharmaceutical content and instructions \cite{pharma} are extremely critical to maintain high similarity thresholds to be considered reliable, as the accuracy of the inferred message is vital to ensure health safety \cite{Shone2011-gs}. In more detail, while similarity is included in the semantic rate, no threshold requirement is considered, thus allowing semantic users to achieve great performance in terms of rate, but without considering the quality of the inferred messages. In addition to these, as emphasized in \cite{CommCompTradeoff, gai_sem_survey,semcom_gen_survey}, the utilization of DNN architectures, especially when many modules are included, can be challenging to perform in large scale networks, as resources like memory and computational capabilities are limited at each communication node. 

Motivated by these, in this work, we investigate the performance of a hybrid BitCom-SemCom cellular network in terms of reliability and resource management, while taking into account similarity constraints for each network user. The contributions of this work are listed below:

\begin{itemize}
    \item For the first time in the literature, we study the performance of a hybrid BitCom-SemCom network in terms of outage probability, where each user must satisfy a common similarity threshold in order to utilize SemCom transmission. With this in mind, the outage probability of a pure SemCom and a hybrid user is extracted in closed form. Moreover, we prove that the hybrid network always has better performance over the other two transmission benchmarks, BitCom and SemCom, where network outage is considered for all users and at least one user being in outage. 
    
    \item We introduce a generalized outage probability to evaluate the outage probability when the network is allowed to tolerate a certain range of users being in outage. For the most significant scenario that considers up to a specific number of users being in outage, the cell radius, below which a desired outage rate threshold can be achieved, is found and a closed-form expression is derived for the case of free-space path loss.

    \item Although SemCom can be preferable from both an outage and a semantic rate perspective, its utilization can be costly due to the specialized dedicated hardware that must be deployed, while it is also characterized by increased requirements in terms of memory and computational resources in order to support multiple users. With this in mind, we introduce semantic utilization as a metric to calculate the probability that a specific range of users utilize SemCom.
    Using this metric, we can design the cell radius such that the users selecting SemCom transmission are in a level that can be supported by the resources of the network infrastructure. Moreover, the optimal radius to maximize this probability is given in terms of network parameters. 

    \item Simulation results are performed that validate the derived theoretical expressions and provide design insights about the cell radius. As indicated by the results, the hybrid BitCom-SemCom network can achieve significant enhancement in communication reliability and spectral efficiency, while its flexibility with respect to the number of users that utilize SemCom transmission offers an attractive solution to the memory and computational resource limitations posed by practical networks.
\end{itemize}

\subsection{Structure}
The rest of this paper is organized as follows. In Section \ref{sec:SysMod}, the system model that includes BitCom and SemCom transmission methods is presented. In Section \ref{sec:cellularHybrid}, the hybrid user is defined through the derivation of the SNR cumulative distribution function (CDF). In Section \ref{sec:perfAnalysis}, the outage probability for each type of network is derived and a generalized outage probability as well as a semantic utilization metric are proposed and studied. In Section \ref{sec:numerical}, the simulation and theoretical results are presented, while, in Section \ref{sec:conc}, the conclusions of this work are discussed.

\section{System Model}\label{sec:SysMod}
Without loss of generality, we assume that $L$ users equipped with a single antenna are serviced by a base station (BS), which aims to transmit a task $S$ consisting of $P$ sentences to each user. The BS is assumed to be located in the center of a cell and equipped with a single antenna as well as additional specialized hardware that can support SemCom transmission except for the conventional BitCom transmission. With this in mind, the path-loss of the $l$-th user is given as $p_l(r_l) = \left( \frac{\lambda}{4 \pi} \right)^2 {r_{l}}^{-a} $, where $\lambda$ is the wavelength corresponding to the transmitting frequency $f_c$ of the BS, $r_l$ is the distance between the $l$-th user and the BS and $a$ is the exponent loss. Without loss of generality, the small scale channel fading of all users is assumed to follow the same Rayleigh distribution $h_l \sim CN\left(0, 1 \right)$, $\forall l \in \{ 1, \cdots , L \}$. In the investigated hybrid BitCom-SemCom network, the BS can select to communicate with each user by utilizing either SemCom or conventional BitCom, as illustrated in Fig. \ref{fig:sysmod}. To simultaneously serve all users, frequency domain multiple access (FDMA) is utilized by the BS, which transmits with a constant power $\mathcal{P}$ for all users over a total available bandwidth $W$ that is split equally among them. As such, the received SNR of the $l$-th user is equal to $g_l = c_L|h_l|^2r_l^{-a}$, where $c_L = L\mathcal{P} / (N_0 W) (\lambda / 4\pi)^2$ and $N_0$ is the noise spectral density. 

\subsection{SemCom Transmission}
In order to enable SemCom transmission, the proposed system utilizes the DeepSC network, which maps each sentence within $S$ to a sequence of real numbers via a semantic encoder and passes them through a channel encoder to mitigate the effect of channel fading. By considering the $j$-th sentence, denoted as $S_j$, the aforementioned procedure can be described as creating a semantically equivalent sentence $S_j'$, which is the output of the semantic encoder network, and then encoding $S'_j$ through the channel encoder network into a vector $\mathbf{x}_{j} = [x_{1}, \cdots, x_{kO_{j}}]$, where $O_{j}$ denotes the number of words in $S_j$ and $k$ is the number of outputs of the DNN for each word. By leveraging continuous-amplitude signals, the vector elements of $\mathbf{x}_{j}$ are then transmitted using discrete-time analog transmission (DTAT) \cite{10002903} and a channel decoder and semantic decoder are used on the receiver side to reconstruct the original data. 

\begin{figure}
    \centering
    \includegraphics[width=0.9\linewidth]{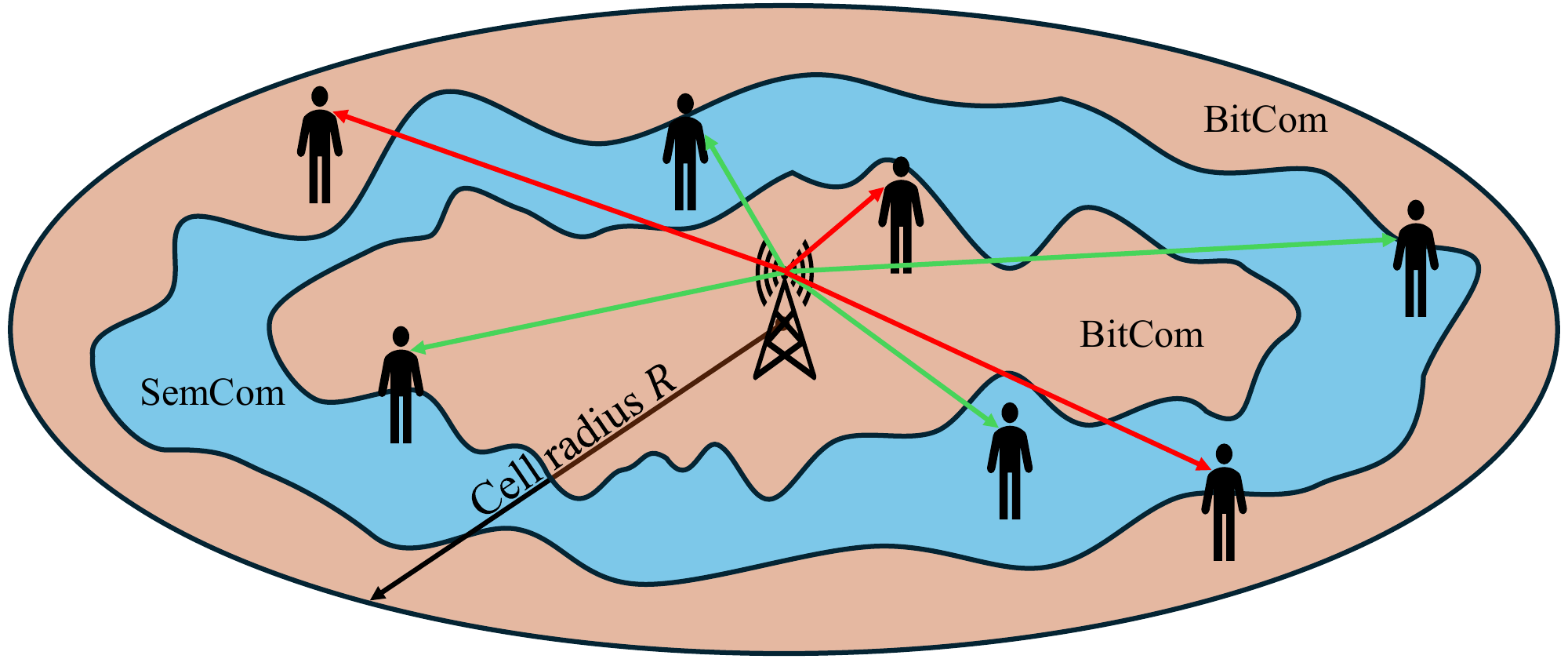}
    \caption{Model of the hybrid BitCom-SemCom network cell of multiple users. The users inside the blue region have a received SNR for which SemCom utilization is used, as illustrated by the green arrows. Similarly, the users in the brown areas have a received SNR such that BitCom transmission is utilized by the BS, as indicated by the red arrows.}
    \label{fig:sysmod}
\end{figure}

In order to measure the accuracy of an original sentence in $S_P$ and its reconstructed form, we use semantic cosine similarity as a performance metric, which is defined for the $j$-th sentence as
\begin{equation}\label{eq:SimDefine}
    M_{j} = \frac{\mathbf{B}(S_{j}) \mathbf{B}(S'_{j})^{T} }{ \|\mathbf{B}(S_{j}) \| \| \mathbf{B}(S'_{j})^{T} \|},
\end{equation}
where $(\cdot)^{T}$ denotes the transpose operator of a vector, and $\mathbf{B}(\cdot)$ denotes the bidirectional encoder representations from transformers (BERT) for each sentence $S_{j}$. Note that BERT transforms the original sentence $S_j$ into a vectorized representation of specific dimensions, which is then utilized to evaluate the cosine similarity between the vectorized form of the original sentence and its reconstructed counterpart. Semantic similarity as introduced in \eqref{eq:SimDefine} aims to evaluate the similarity of sentences across their entire length, thus allowing better estimation of semantic correlations between sentences compared to other metrics like bilingual evaluation understudy (BLEU) score which depends on $m$-grams, limiting the range of the semantic correlations that can be found by this performance metric. Considering the amount of semantic information $I_j$ that is conveyed by sentence $S_j$ of length $D_j$, we can introduce the expected values of semantic information per sentence $I = \sum_{j=1}^{P} I_j \mathbb{P}(S_j)$ and number of words per sentence $D = \sum_{j=1}^{P} D_j \mathbb{P}(S_j)$, respectively, where $\mathbb{P}(\cdot)$ denotes probability.

As illustrated in \cite{9398576,9763856}, the utilization of DeepSC leads to semantic similarity that varies according to the number of DNN outputs per word, $k$, and the received SNR, $g_l$, for all users $l \in \{1, \cdots, L\}$. Since the curves extracted from DeepSC do not have a closed-form expression, to handle the semantic similarity for further analysis, in this work we adopt the tractable and highly accurate approximation introduced in \cite{hybrid1}. Therefore, the semantic similarity for the $l$-th user is expressed through a generalized logistic function as 
\begin{equation}\label{eq:sem_approx}
    M_{k,l} = M_{k,l}(g_l) \approx A_{k,1} + \frac{A_{k,2} - A_{k,1}}{1+ e^{-\left( C_{k,1} 10\log_{10}(g_l) + C_{k,2} \right)}},
\end{equation}
where $A_{k,1}, A_{k,2}, C_{k,1}, C_{k,2}$ are curve-fitting derived constants that are dependent on $k$. Then, based on the average information and the average length of the whole task $S$, we can use the semantic rate of the $l$-th user as a measure of the transmission rate which is given as
\begin{equation}\label{eq:semRate}
    R_{s,l} = \frac{W_lI}{kD}M_{k,l},
\end{equation}
where $M_{k,l}$ is the semantic similarity of the $l$-th user and $R_{s,l}$ is measured in \textit{semantic symbols per second (suts/s)}. 

\subsection{BitCom Transmission}
Instead of utilizing semantic transmission, a user can also utilize BitCom to achieve a capacity equal to 
\begin{equation}\label{eq:capacity}
    C_l^{\mathrm{max}} = W_l \log_2\left( 1+ g_l \right).
\end{equation}
Nevertheless, the capacity is a theoretical limit and only close approximations to it can be achieved through specific coding schemes, such as low-density parity-check (LDPC) codes \cite{ldpc}. On the other hand, in the case of uncoded transmission, a rate gap between the capacity and the achievable maximum data rate exists due to the bit error rate (BER). Without loss of generality, we consider an uncoded $M$-QAM constellation, whose BER has been shown in \cite{634685,1177182} to be upper bounded as
\begin{equation} \label{eq:BERbound1}
    \mathrm{BER} \leq \frac{1}{5}\exp{\left( - 1.5 g_l \frac{1}{M-1} \right)},
\end{equation}
where $M$ is the modulation order. 
As a result, the maximum achievable data rate of an uncoded $M$-QAM scheme, which satisfies a required BER threshold is given by \cite{1177182}
\begin{equation}\label{eq:capacityGamma}
    C_{l} = W_l\log_2\left( 1+\frac{g_l}{\Gamma} \right),
\end{equation}
where $\Gamma = \sfrac{-\ln({\mathrm{5BER}})}{1.5}$ and is such that $\Gamma \geq 1 $. Thus, \eqref{eq:capacity} and \eqref{eq:capacityGamma} can be considered as the best and worst achievable data rates, respectively, with the latter reducing to the former when $\Gamma = 1$. To fairly compare the semantic and bit transmission, we use the equivalent expression of the BitCom rate \cite{9763856}, which is given as
\begin{equation}\label{eq:bitRate}
    R_{b,l} = C_{l}\frac{I}{\mu D} M_{k,l},
\end{equation}
where $\mu$ is a factor representing the average number of bits per word, thus $R_{b,l}$ is also measured in \textit{suts/s} for the $l$-th user. 

Since in BitCom accurate data transmission without errors is achieved, a perfect reconstruction of the transmitted data is possible at the receiver side of each user leading to $M_{k,l} = 1$. On the other hand, in SemCom, errors can be allowed at the receiver side of each user as long as a desired similarity threshold $M^{\mathrm{th}}$ between the transmitted and reconstructed data can be achieved. Therefore, the $l$-th user can utilize semantic transmission as long as $M^{\mathrm{th}} \leq M_{k,l} < 1$. Note that in the discussed system the BS selects a transmission mode for each user once at the beginning of the task and does not change until the transmission of the whole task is finished. 

\section{Hybrid BitCom-SemCom Cellular Network}\label{sec:cellularHybrid}
In the considered system model, all users are uniformly distributed within a cell of radius $R$. Taking into account that all users are subject to similar fading conditions described by the same distribution, hereinafter, the index is dropped from the received SNR, the distance from the BS, the similarity and the achievable BitCom and SemCom rate of each user for the sake of notation. With this in mind, the aforementioned will be denoted by $g$, $r$, $M_{k}$, $R_{b}$ and $R_{s}$, respectively. Therefore, each user follows the uniform circular distribution, i.e., $f_{r}(t) = \frac{2t}{R^2}$, $0 \leq t \leq R$, and the CDF of the received SNR of a user is given as
\begin{equation}\label{eq:CDFcombined}
    F_{g}(y) = \mathbb{E}_{r}\left[ F_{g | r}(y | r)\right] = \int_{0}^{R}\mathbb{P}(g \leq y | r = t)f_r(t) dt.
\end{equation}
Taking into consideration the fact that 
\begin{equation}\label{eq:probH}
    \mathbb{P}(g \leq y | r = t) = \mathbb{P}\left(|h|^2 \leq  \frac{yt^{a}}{c_L}\right) = 1 - e^{-\frac{yt^{a}}{c}},
\end{equation}
and by combining \eqref{eq:CDFcombined} with \eqref{eq:probH}, the CDF is calculated in closed form through the integral in \cite[eq. 3.381]{integralBook} as
\begin{align}\label{eq:CDF}
    F_{g}(y) &= 1 - \frac{2}{aR^2}\left( \frac{c_L}{y} \right)^{\frac{2}{a}}\gamma\left( \frac{2}{a},\frac{y}{c_L}R^a\right) \nonumber \\
    &= 1 - {}_1\!F_1\left(\!\frac{2}{a};1\!+\!\frac{2}{a};-\frac{y}{c_L}R^a\right),
\end{align}
where $\gamma(\cdot,\cdot)$ denotes the lower gamma incomplete function and ${}_1\!F_1(\cdot ; \cdot ; \cdot)$ is the confluent hypergeometric function of the first kind.

Due to the similarity threshold, imposed as a quality of service (QoS) constraint on a user, and the fact that BitCom outperforms SemCom in high received SNR values, it is obvious that a user should prefer to utilize semantic transmission when some bounds in received SNR are satisfied. These bounds can be equivalently expressed in terms of the received SNR of each user, $g$, thus valuable insights on the topology of the users in a cell to achieve a desired semantic utilization can be obtained. Considering a common similarity threshold $M^{\mathrm{th}}$, the minimum received SNR to achieve this QoS arises from $M_{k} = M^{\mathrm{th}}$ which has a closed-form solution in terms of received SNR given by
\begin{equation}\label{eq:SNRmin}
g_{\mathrm{min}} = 10^{- \frac{\ln\left( \frac{A_{k,2} - M^{\mathrm{th}}}{M^{\mathrm{th}} - A_{k,1}} \right) + C_{k,2}}{10C_{k,1}}},
\end{equation}
where the argument of the logarithm is always positive since $M^{\mathrm{th}} \in (A_{k,1}, A_{k,2})$. Accordingly, a maximum received SNR can be found by solving the equation $R_{s}(g) = R_{b}(g)$, which leads to 
\begin{equation}\label{eq:SNRmax}
g_{\mathrm{max}} = \{g | R_{s}(g) = R_{b}(g) \},
\end{equation}
beyond which BitCom outperforms its SemCom counterpart. Note that from \eqref{eq:sem_approx} it is straightforward to obtain $\lim_{g \to 0} M_{k} = A_{k,1}$ and $\lim_{g \to +\infty} M_{k} = A_{k,2}$, which essentially results in the semantic rate to be bounded as $R_{s} \in (A_{k,1},A_{k,2})$. On the other hand, $\lim_{g \to 0} R_{b}(g) = 0$ and $\lim_{g \to +\infty} R_{b}(g) = +\infty$, thus a solution that results in $g_{\mathrm{max}}$ always exists in \eqref{eq:SNRmax}. With this in mind, \eqref{eq:SNRmax} has a solution such that whenever $g_{\mathrm{max}} \leq g_{\mathrm{min}}$, BitCom always outperforms SemCom and the studied hybrid systems is equivalently reduced to the BitCom scenario. Therefore, the following set determines the range of SNR values for semantic utilization:
\begin{equation}\label{eq:SNRbounds}
  \mathcal{G} =
      \left\{g | g_{\mathrm{min}} \leq g \leq g_{\mathrm{max}} \right\}.
\end{equation}
Based on these, a hybrid user is defined to be one that selects between BitCom and SemCom transmission, as highlighted in Fig. \ref{fig:sysmod}, in a way that maximizes the semantic rate. Consequently, the rate of a hybrid user is expressed as
\begin{equation}\label{eq:hybridRate}
    R_{h} = \begin{cases}
        R_{b}(g),& g \leq g_{\mathrm{min}}, \\
        R_{s}(g),& g_{\mathrm{min}} \leq g \leq g_{\mathrm{max}},  \\
        R_{b}(g),& g_{\mathrm{max}} \leq g, 
    \end{cases}
\end{equation}
which can be derived by leveraging \eqref{eq:SNRbounds}.  

\section{Performance Analysis}\label{sec:perfAnalysis}
In this section, the outage probability of the proposed hybrid network is studied along with that of pure BitCom and SemCom networks. With this in mind, we first investigate two different versions of network outage, namely when all users are in outage and when at least one user is in outage. Additionally, a generalized outage metric is examined for the proposed hybrid network and the cell radius to satisfy a desired outage rate threshold is estimated according to it. Moreover, semantic utilization is investigated to find the optimal cell radius that ensures an efficient and effective resource management from the network side.

For a user of the studied hybrid network to be in outage, there are two possible cases. Specifically, a user outage can occur either if BitCom is preferred and outage in BitCom occurs or if SemCom is preferred and outage in SemCom occurs. 

Assuming that a rate threshold $R_{\mathrm{out}} I/D$ \textit{suts/s} is set below which outage occurs in BitCom, $R_{b} \leq R_{\mathrm{out}}$ holds for the values of SNR described by the following set:
\begin{equation}\label{eq:bitOut}
    \mathcal{A}_{\mathrm{b}} = \left\{ g | g \leq g_{\mathrm{bit}} \right\}, 
\end{equation}
where $g_{\mathrm{bit}} = 2^{\mu R_{\mathrm{out}}} - 1$. Similarly, considering the SemCom case, outage occurs when $R_{s} \leq R_{\mathrm{out}}$. The latter holds for the values of SNR described by the following set:
\begin{align}\label{eq:semOut}
    \mathcal{A}_{\mathrm{s}} = \begin{cases}
        \left\{ \emptyset\right\},& kR_{\mathrm{out}} \leq A_{k,1}, \\
        \left\{g | g \leq g_{\mathrm{sem}} \right\},& A_{k,1} \leq kR_{\mathrm{out}} \leq A_{k,2},  \\
        \mathbb{R},& A_{k,2} \leq kR_{\mathrm{out}}, 
    \end{cases}
\end{align}
where $g_{\mathrm{sem}}$ is derived in similar fashion to \eqref{eq:SNRmin} and is given as
\begin{equation}\label{eq:SNRoutSem}
g_{\mathrm{sem}} = 10^{- \frac{\ln\left( \frac{A_{k,2} - kR_{\mathrm{out}}}{kR_{\mathrm{out}} - A_{k,1}} \right) + C_{k,2}}{10C_{k,1}}}.
\end{equation}

The outage probability of a user in the hybrid network is defined as 
\begin{equation}\label{eq:outProbHyb}
    \Pi_h = \mathbb{P}\left(g \in \left\{ \mathcal{A}_b \cap \mathcal{G}^c \right\} \right) + \mathbb{P}\left(g \in\left\{\mathcal{A}_s \cap \mathcal{G} \right\} \right),
\end{equation}
where $\mathcal{G}^c$ denotes the complementary set of $\mathcal{G}$ and the first and second terms express the hybrid user outage probability for BitCom and SemCom, respectively. A closed-form expression for the outage probability is provided in the following theorem.
\begin{theorem}\label{th:HybridPerf}
The outage probability of a user in the hybrid network can be expressed in closed form through \eqref{eq:hybCombSemOut} which is given at the top of the next page.
\end{theorem}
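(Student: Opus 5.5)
The plan is to reduce \eqref{eq:outProbHyb} to the single-user SNR CDF $F_g$ of \eqref{eq:CDF}, evaluated at a few thresholds. Every set involved is an interval or a union of intervals in $g$, so each probability is a difference of values of $F_g$. Since $g$ is continuous, endpoint conventions are irrelevant. We may assume $g_{\mathrm{min}}<g_{\mathrm{max}}$; otherwise $\mathcal{G}=\emptyset$, the network reduces to BitCom, and $\Pi_h=F_g(g_{\mathrm{bit}})$.

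\textbf{BitCom term.} We have $\mathcal{G}^c=[0,g_{\mathrm{min}})\cup(g_{\mathrm{max}},\infty)$ and $\mathcal{A}_b=[0,g_{\mathrm{bit}}]$. The intersection is therefore $[0,\min(g_{\mathrm{bit}},g_{\mathrm{min}})]\cup(g_{\mathrm{max}},\max(g_{\mathrm{bit}},g_{\mathrm{max}})]$. Its probability is
\begin{equation*}
F_g\!\left(\min(g_{\mathrm{bit}},g_{\mathrm{min}})\right)+F_g\!\left(\max(g_{\mathrm{bit}},g_{\mathrm{max}})\right)-F_g(g_{\mathrm{max}}).
\end{equation*}
The $\min$/$\max$ can instead be unfolded into the three cases $g_{\mathrm{bit}}\le g_{\mathrm{min}}$, $g_{\mathrm{min}}<g_{\mathrm{bit}}\le g_{\mathrm{max}}$ and $g_{\mathrm{bit}}>g_{\mathrm{max}}$. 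This gives $F_g(g_{\mathrm{bit}})$, $F_g(g_{\mathrm{min}})$ and $F_g(g_{\mathrm{min}})+F_g(g_{\mathrm{bit}})-F_g(g_{\mathrm{max}})$, respectively.

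\textbf{SemCom term.} Here we use the three regimes of \eqref{eq:semOut}.
\begin{itemize}
\item If $kR_{\mathrm{out}}\le A_{k,1}$, then $\mathcal{A}_s=\emptyset$ and the term vanishes.
\item If $kR_{\mathrm{out}}\ge A_{k,2}$, then $\mathcal{A}_s=\mathbb{R}$ and the term equals $F_g(g_{\mathrm{max}})-F_g(g_{\mathrm{min}})$.
\item In the middle regime, $\mathcal{A}_s\cap\mathcal{G}=[g_{\mathrm{min}},\min(g_{\mathrm{sem}},g_{\mathrm{max}})]$ when nonempty, with $g_{\mathrm{sem}}$ from \eqref{eq:SNRoutSem}. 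The term is then $\left[F_g(\min(g_{\mathrm{sem}},g_{\mathrm{max}}))-F_g(g_{\mathrm{min}})\right]^+$.
\end{itemize}
Monotonicity of the logistic map \eqref{eq:sem_approx} (taking $C_{k,1}>0$, $A_{k,2}>A_{k,1}$) guarantees that $g_{\mathrm{sem}}$ is well defined and that $g\le g_{\mathrm{sem}}\iff R_s\le R_{\mathrm{out}}$. This justifies the form of $\mathcal{A}_s$.

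\textbf{Assembly.} Summing the two terms and substituting $F_g(y)=1-{}_1F_1\!\left(\tfrac{2}{a};1+\tfrac{2}{a};-\tfrac{y}{c_L}R^a\right)$ from \eqref{eq:CDF} gives a piecewise closed form, which is \eqref{eq:hybCombSemOut}. The constant $1$'s cancel in every difference, so the result is a signed sum of ${}_1F_1$ terms. Equivalently, it can be written with lower incomplete gamma functions.

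\textbf{Main obstacle.} The only delicate part is bookkeeping: the relative orderings of $g_{\mathrm{bit}}$ and $g_{\mathrm{sem}}$ with respect to $g_{\mathrm{min}}$ and $g_{\mathrm{max}}$ produce a grid of cases. To keep this compact, I would express everything with $\min$, $\max$ and $[\cdot]^+$ as above rather than enumerating cases.
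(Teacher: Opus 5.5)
Your computation of the two terms is correct and follows the paper's route. Your BitCom cases are exactly \eqref{eq:singleBitOut}, your SemCom regimes are \eqref{eq:singleSemOut}, and the paper likewise just adds them via \eqref{eq:outProbHyb}. The gap is the last sentence of your assembly, where you assert that the sum \emph{is} \eqref{eq:hybCombSemOut}. If you sum your terms over every cell of the case grid, you get configurations that \eqref{eq:hybCombSemOut} does not list, for example:
\begin{itemize}
\item $kR_{\mathrm{out}}\le A_{k,1}$ with $g_{\mathrm{bit}}>g_{\mathrm{max}}$, with value $F_g(g_{\mathrm{min}})+F_g(g_{\mathrm{bit}})-F_g(g_{\mathrm{max}})$;
\item the regime $A_{k,1}\le kR_{\mathrm{out}}\le A_{k,2}$ with $g_{\mathrm{bit}}\le g_{\mathrm{min}}<g_{\mathrm{sem}}\le g_{\mathrm{max}}$, with value $F_g(g_{\mathrm{bit}})+F_g(g_{\mathrm{sem}})-F_g(g_{\mathrm{min}})$;
\item $kR_{\mathrm{out}}\ge A_{k,2}$ with $g_{\mathrm{bit}}\le g_{\mathrm{max}}$.
\end{itemize}
Conversely, the middle-regime branches of \eqref{eq:hybCombSemOut} fix the order of $g_{\mathrm{sem}}$ relative to $g_{\mathrm{bit}}$. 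Your bookkeeping never makes that comparison, since it only places each threshold relative to $g_{\mathrm{min}}$ and $g_{\mathrm{max}}$. Checking that your expression gives the stated value on each of the seven listed branches is routine. What is missing is the argument that these seven branches exhaust all configurations that can actually occur.

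That argument needs one extra ingredient: the single-crossing structure of the two rate curves. Concretely, $R_b\le R_s$ on $[0,g_{\mathrm{max}}]$ and $R_b>R_s$ beyond it, with both rates increasing in $g$. Since $R_b(g_{\mathrm{bit}})=R_{\mathrm{out}}=R_s(g_{\mathrm{sem}})$, the middle regime admits only $g_{\mathrm{sem}}\le g_{\mathrm{bit}}\le g_{\mathrm{max}}$ or $g_{\mathrm{max}}<g_{\mathrm{bit}}<g_{\mathrm{sem}}$. If $kR_{\mathrm{out}}\le A_{k,1}$, SemCom never drops to $R_{\mathrm{out}}$, so $R_b(g_{\mathrm{max}})=R_s(g_{\mathrm{max}})>R_{\mathrm{out}}$, which forces $g_{\mathrm{bit}}<g_{\mathrm{max}}$. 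Symmetrically, $kR_{\mathrm{out}}\ge A_{k,2}$ forces $g_{\mathrm{bit}}>g_{\mathrm{max}}$. With these facts the extra cells are empty, and your $\min/\max/[\cdot]^+$ expression collapses exactly to the seven branches.

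The paper's own proof of this theorem is equally terse on this point; the ordering argument is the one carried out in the proof of Theorem \ref{th:NetworkPerf}. Your compact expression is valid without any crossing assumption, since it only uses that $\mathcal{G}$ is an interval. It is therefore a good closed form for $\Pi_h$ in its own right. It just is not the specific seven-branch statement until you add the ordering lemma.
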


\begin{IEEEproof}
Based on \eqref{eq:SNRbounds}, \eqref{eq:bitOut} and \eqref{eq:semOut}, the outage probability can be rigorously calculated through \eqref{eq:singleBitOut} and \eqref{eq:singleSemOut} for each case, given at the top of the next page. Using \eqref{eq:outProbHyb}, \eqref{eq:hybCombSemOut} is derived, which completes the proof.
\end{IEEEproof}

\begin{figure*}[!t]
\begin{equation}\label{eq:hybCombSemOut}
\Pi_h \!=\! \begin{cases}  F_{g}\left(g_{\mathrm{bit}}\right),& \!\!\! g_{\mathrm{bit}} \leq g_{\mathrm{min}}, kR_{\mathrm{out}} \leq A_{k,1}, \\ 
F_{g}\left(g_{\mathrm{min}}\right),&\!\!\! g_{\mathrm{min}} \leq g_{\mathrm{bit}} \leq g_{\mathrm{max}}, kR_{\mathrm{out}} \leq A_{k,1}, \\
F_{g}\left(g_{\mathrm{bit}}\right),&\!\!\! g_{\mathrm{sem}} \leq g_{\mathrm{bit}} \leq g_{\mathrm{min}}, A_{k,1} \leq kR_{\mathrm{out}} \leq A_{k,2}, \\
F_{g}\left(g_{\mathrm{min}}\right),&\!\!\! g_{\mathrm{sem}} \leq g_{\mathrm{min}} \leq g_{\mathrm{bit}} \leq g_{\mathrm{max}}, A_{k,1} \leq kR_{\mathrm{out}} \leq A_{k,2}, \\
F_{g}\left(g_{\mathrm{sem}}\right),&\!\!\! g_{\mathrm{min}} \leq g_{\mathrm{sem}} \leq g_{\mathrm{bit}} \leq g_{\mathrm{max}}, A_{k,1} \leq kR_{\mathrm{out}} \leq A_{k,2},\\
F_{g}\left(g_{\mathrm{bit}}\right),&\!\!\! g_{\mathrm{max}} \leq g_{\mathrm{bit}} \leq g_{\mathrm{sem}}, A_{k,1} \leq kR_{\mathrm{out}} \leq A_{k,2},\\ 
F_{g}\left(g_{\mathrm{bit}}\right),&\!\!\! g_{\mathrm{max}} \leq g_{\mathrm{bit}}, A_{k,2} \leq kR_{\mathrm{out}}\\ 
    \end{cases}
\end{equation}

\hrulefill
\vspace{-1mm}
\begin{equation}\label{eq:singleBitOut}
\!\!\mathbb{P}\left( g \! \in \!\left\{ \mathcal{A}_b \cap \mathcal{G}^c \right\} \right) \!=\! \begin{cases}        \mathbb{P}\left(g \leq g_{\mathrm{bit}} \right) = F_{g}\left( g_{\mathrm{bit}}\right) ,& \!\!\! g_{\mathrm{bit}} \leq g_{\mathrm{min}}, \\
\mathbb{P}\left(g \leq g_{\mathrm{min}} \right) = F_{g}\left( g_{\mathrm{min}}\right) ,&\!\!\! g_{\mathrm{min}} \leq g_{\mathrm{bit}} \leq g_{\mathrm{max}}, \\
\mathbb{P}\left( \left\{g \leq g_{\mathrm{min}} \right\} \!\cup \!\left\{g_{\mathrm{max}} \leq g \leq g_{\mathrm{bit}}\right\} \right) = F_{g}\left(g_{\mathrm{min}}\right) + F_{g}\left(g_{\mathrm{bit}}\right) -  F_{g}\left(g_{\mathrm{max}}\right),&\!\!\! g_{\mathrm{max}} \leq g_{\mathrm{bit}} 
    \end{cases}
\end{equation}

\hrulefill
\vspace{-1mm}
\begin{equation}\label{eq:singleSemOut}
\!\!\mathbb{P}\left(g \! \in \!\left\{ \mathcal{A}_s \cap \mathcal{G} \right\}\right) \!=\! \begin{cases} 
\mathbb{P}\left( \emptyset \right) = 0,&\!\!\! kR_{\mathrm{out}} \leq A_{k,1}, \\
\mathbb{P}\left( \emptyset \right) = 0,&\!\!\! g_{\mathrm{sem}} \leq g_{\mathrm{min}}, A_{k,1} \leq kR_{\mathrm{out}} \leq A_{k,2}, \\
\mathbb{P}\left(g_{\mathrm{min}} \leq g \leq g_{\mathrm{sem}} \right) = F_{g}\left(g_{\mathrm{sem}}\right) - F_{g}\left(g_{\mathrm{min}}\right),&\!\!\! g_{\mathrm{min}} \leq g_{\mathrm{sem}} \leq g_{\mathrm{max}}, A_{k,1} \leq kR_{\mathrm{out}} \leq A_{k,2},\\
\mathbb{P}\left(g_{\mathrm{min}} \leq g \leq g_{\mathrm{max}} \right) = F_{g}\left(g_{\mathrm{max}}\right) - F_{g}\left(g_{\mathrm{min}}\right),&\!\!\! g_{\mathrm{max}} \leq g_{\mathrm{sem}}, A_{k,1} \leq kR_{\mathrm{out}} \leq A_{k,2},\\ 
\mathbb{P}\left(g_{\mathrm{min}} \leq g \leq g_{\mathrm{max}} \right) = F_{g}\left(g_{\mathrm{max}}\right) - F_{g}\left(g_{\mathrm{min}}\right),&\!\!\! A_{k,2} \leq kR_{\mathrm{out}}
    \end{cases}
\end{equation}
\end{figure*}

Except for the outage probability of a hybrid network, it is also interesting to study the outage probability of both BitCom and SemCom when they are independently utilized.
In the case of pure BitCom communication, the user outage probability is expressed as
\begin{equation}\label{eq:outProbBit}
    \Pi_b =  \mathbb{P}\left(g \in \mathcal{A}_b  \right) = F_{g}\left(g_{\mathrm{bit}}\right).
\end{equation}
Similarly, in the case of SemCom, the user outage probability $\Pi_{s}$ is given by \eqref{eq:outProbSem} at the top of the next page.

\begin{figure*}
\hrulefill
\vspace{-1mm}
\begin{equation}\label{eq:outProbSem}
\Pi_{s} = \begin{cases}        \mathbb{P}\left(g \leq g_{\mathrm{min}} \right) \!= \!F_{g}\left( g_{\mathrm{min}}\right) ,& \!\!\! kR_{\mathrm{out}} \leq A_{k,1}, \\
\mathbb{P}\left(g \leq g_{\mathrm{min}} \right) = F_{g}\left( g_{\mathrm{min}}\right) ,& \!\!\! g_{\mathrm{sem}} \leq g_{\mathrm{min}}, A_{k,1} \leq kR_{\mathrm{out}} \leq A_{k,2}, \\
\mathbb{P}\left( g \leq g_{\mathrm{sem}} \right) = F_{g}\left(g_{\mathrm{sem}}\right),& \!\!\! g_{\mathrm{min}} \leq g_{\mathrm{sem}}, A_{k,1} \leq kR_{\mathrm{out}} \leq A_{k,2}, \\ 
\mathbb{P}\left( g \in \mathbb{R} \right) = 1,& \!\!\! A_{k,2} \leq kR_{\mathrm{out}}
    \end{cases}
\end{equation}
\hrulefill
\end{figure*}

\subsection{Network Outage}
Let $L_o$ denote the number of users that are in outage in the cell. To evaluate the performance of the proposed system in network level, we are primarily interested in two cases. In more detail, we define network outage to occur when all users are in outage or when at least one user is in outage, which are characterized by $L_o=L$ and $L_o \neq 0$, respectively.

By taking this into account, the network outage probability for the $x$-type network, where $x \in \{h,b,s\}$ denotes whether the users operate in the hybrid, BitCom or SemCom network, respectively, is given as
\begin{align}\label{eq:xNetOut}
    \Pi_{x\mathrm{-net}} = 
    \begin{cases}
        \mathbb{P}_{x}\left( L_o = L\right) = \Pi^L_{x}, \\
        \mathbb{P}_{x}\left( L_o \neq 0\right) = 1 - \left( 1 - \Pi_x\right)^L. 
    \end{cases}
\end{align}
Note that $\mathbb{P}_{x}\left( L_o = L\right)$ denotes the probability $\mathbb{P}\left( L_o = L\right)$ for the $x$-type network.

Taking this into account, it is possible to prove that the proposed hybrid network always achieves outage probability lower than or equal to the BitCom and SemCom network. In more detail, the following theorem can be stated regarding the outage performance of the three networks.

\begin{theorem}\label{th:NetworkPerf}
    The proposed hybrid network outperforms in terms of outage performance the pure BitCom and SemCom networks, i.e., $\Pi_{h\mathrm{-net}} \leq \Pi_{b\mathrm{-net}}$ and $\Pi_{h\mathrm{-net}} \leq \Pi_{s\mathrm{-net}}$.
\end{theorem}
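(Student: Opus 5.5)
The argument reduces the network-level comparison to a single-user comparison. Both expressions in \eqref{eq:xNetOut} are nondecreasing functions of the single-user outage probability $\Pi_x\in[0,1]$: $\Pi_x\mapsto\Pi_x^L$ and $\Pi_x\mapsto 1-(1-\Pi_x)^L$. Since the users are i.i.d., it therefore suffices to prove $\Pi_h\le\Pi_b$ and $\Pi_h\le\Pi_s$. Both network definitions then follow at once.

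For the single-user inequalities I would avoid comparing the case tables \eqref{eq:hybCombSemOut}, \eqref{eq:outProbBit} and \eqref{eq:outProbSem} line by line. Instead I would argue at the level of events through the rate definition \eqref{eq:hybridRate}. By construction, on $\mathcal{G}$ the hybrid user picks SemCom, and there $R_s\ge R_b$ because $g\le g_{\max}$. On $\mathcal{G}^c$ it picks BitCom, and there either $g<g_{\min}$, where SemCom violates the similarity QoS and so counts as outage for a pure SemCom user, or $g>g_{\max}$, where $R_b\ge R_s$.

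With this in place, I would show the set inclusions
\[
\{\mathcal{A}_b\cap\mathcal{G}^c\}\cup\{\mathcal{A}_s\cap\mathcal{G}\}\subseteq\mathcal{A}_b
\]
and the analogous inclusion into the pure SemCom outage set $\mathcal{A}_s\cup\{g<g_{\min}\}$, which is the set underlying \eqref{eq:outProbSem}. Monotonicity of the CDF \eqref{eq:CDF} then gives both inequalities.

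\textbf{Inclusion into $\mathcal{A}_b$.} The first piece $\mathcal{A}_b\cap\mathcal{G}^c$ is trivially contained. For the second piece, take $g\in\mathcal{A}_s\cap\mathcal{G}$. Then $R_b(g)\le R_s(g)\le R_{\mathrm{out}}$, so $g\in\mathcal{A}_b$. Here I need the threshold $R_{\mathrm{out}}$ to be expressed in the same normalized units for both schemes, matching the way $g_{\mathrm{bit}}$ and $g_{\mathrm{sem}}$ were defined.

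\textbf{Inclusion into the SemCom outage set.} The second piece $\mathcal{A}_s\cap\mathcal{G}$ is trivially contained. For the first piece, take $g\in\mathcal{A}_b\cap\mathcal{G}^c$. If $g<g_{\min}$, it is a SemCom outage by the similarity constraint. If $g>g_{\max}$, then $R_s(g)\le R_b(g)\le R_{\mathrm{out}}$, so $g\in\mathcal{A}_s$.

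The main obstacle is making the rate comparisons on $\mathcal{G}$ and $\mathcal{G}^c$ rigorous. This requires that $R_s-R_b$ change sign only once, at $g_{\max}$, so that $R_s\ge R_b$ on $[g_{\min},g_{\max}]$ and $R_b\ge R_s$ beyond $g_{\max}$. I would obtain this from $R_b$ being increasing and unbounded while the logistic $R_s$ is bounded, taking $g_{\max}$ in \eqref{eq:SNRmax} as the relevant crossing, as the paper implicitly does. As a fallback, I would verify the inequalities directly case by case from the tables. For example, in the fifth case of \eqref{eq:hybCombSemOut} one has $F_g(g_{\mathrm{sem}})\le F_g(g_{\mathrm{bit}})$ because $g_{\mathrm{sem}}\le g_{\mathrm{bit}}$, and $\Pi_s=F_g(g_{\mathrm{sem}})$ from \eqref{eq:outProbSem}. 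The degenerate case $g_{\max}\le g_{\min}$, where $\mathcal{G}=\emptyset$, gives $\Pi_h=\Pi_b\le\Pi_s$ by the same inclusion argument.
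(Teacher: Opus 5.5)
Your argument is correct, but it takes a different route from the paper.

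\textbf{The paper's route.} Both proofs start from the same single-user reduction, which you justify explicitly via the monotonicity of $x\mapsto x^L$ and $x\mapsto 1-(1-x)^L$. The paper then splits into four regimes: $kR_{\mathrm{out}}$ versus $A_{k,1},A_{k,2}$, and $R_{\mathrm{out}}$ versus $R_{\mathrm{max}}=R_s(g_{\mathrm{max}})$. In each regime it uses contradiction arguments to fix the ordering of $g_{\mathrm{bit}}$, $g_{\mathrm{sem}}$ and $g_{\mathrm{max}}$. This gives $g_{\mathrm{sem}}\le g_{\mathrm{bit}}\le g_{\mathrm{max}}$ if $R_{\mathrm{out}}\le R_{\mathrm{max}}$, and $g_{\mathrm{max}}<g_{\mathrm{bit}}<g_{\mathrm{sem}}$ otherwise. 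It then matches branches of \eqref{eq:hybCombSemOut}, \eqref{eq:outProbBit} and \eqref{eq:outProbSem}.

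\textbf{What your route buys.} Your pointwise inclusions of the hybrid outage set into $\mathcal{A}_b$ and into $\mathcal{A}_s\cup\{g<g_{\mathrm{min}}\}$ replace this case analysis with two short arguments. They treat the degenerate case $\mathcal{G}=\emptyset$ uniformly. They also spare you checking that the tables are exhaustive; for example, the paper tacitly uses that $g_{\mathrm{bit}}>g_{\mathrm{max}}$ cannot occur when $kR_{\mathrm{out}}\le A_{k,1}$. Because the inclusions hold realization by realization, the set of users in outage under the hybrid scheme is contained in that of either pure scheme. This holds for any fading or location distribution, and does not even need independence.

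\textbf{What the paper's route buys.} Its computations show that in every regime $\Pi_h$ coincides with one of $\Pi_b,\Pi_s$. Hence $\Pi_h=\min\{\Pi_b,\Pi_s\}$: the hybrid network attains exactly the better benchmark. Your inclusions alone do not give this tightness.

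\textbf{One caveat on your stated obstacle.} $R_b$ being increasing and unbounded while $R_s$ is bounded only gives the existence of a crossing, not its uniqueness. The single sign change of $R_s-R_b$ therefore has to be built into the definition \eqref{eq:SNRmax} of $g_{\mathrm{max}}$. The paper does exactly this when it asserts $R_b\le R_s$ for $g\le g_{\mathrm{max}}$. Under that shared assumption your proof is complete.
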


\begin{IEEEproof}
    Note that equivalent semantic rates exist for $g_{\mathrm{min}}$ and $g_{\mathrm{max}}$ such that $R_{\mathrm{min}} = R_{s}(g_{\mathrm{min}})$ and $R_{\mathrm{max}} = R_{s}(g_{\mathrm{max}}) = R_{b}(g_{\mathrm{max}})$, respectively.
    To compare the performance between all networks for both definitions of network outage, it suffices to compare a single user for each network case. We consider four cases and show that for all of them the hybrid network outperforms the BitCom and SemCom networks in terms of reliability.

    \begin{enumerate}[wide, labelwidth=!, labelindent=0pt]
    \item If $kR_{\mathrm{out}} \leq A_{k,1}$:
 In this case, as illustrated in Fig. \ref{fig:ExplainProb} it is easy to see that $\Pi_{s} = F_{g}\left( g_{\mathrm{min}}\right)$, since SemCom always achieves rates greater than that of the condition and it suffices to achieve the desired similarity threshold as described by the first branch of \eqref{eq:outProbSem}. By leveraging \eqref{eq:hybCombSemOut}, it is observed that in the case of $g_{\mathrm{bit}} \leq g_{\mathrm{min}}$, it holds that $\Pi_h = F_{g}\left( g_{\mathrm{bit}}\right)$, while for $g_{\mathrm{min}} \leq g_{\mathrm{bit}} \leq g_{\mathrm{max}}$, it follows that $\Pi_h = F_{g}\left( g_{\mathrm{min}}\right)$. Therefore, $\Pi_h = F_{g}\left( \min{\{g_{\mathrm{bit}},g_{\mathrm{min}} \}}\right)$ and in either case the outage probability of the hybrid network is smaller than or equal to that of pure BitCom and SemCom networks. These scenarios are visually depicted by $R_{\mathrm{out}} = R_1$ and $R_{\mathrm{out}} = R_2$ in Fig. \ref{fig:ExplainProb}. 
    
    \item If $A_{k,1} \leq kR_{\mathrm{out}} \leq A_{k,2}$ and $R_{\mathrm{out}} \leq R_{\mathrm{max}}$:
For $g \leq g_{\mathrm{max}}$, it holds that $R_{b}(g) \leq R_{s}(g)$.
    For the outage rate it is known that $R_{\mathrm{out}} = R_{b}(g_{\mathrm{bit}}) = R_{s}(g_{\mathrm{sem}})$. Without loss of generality, we assume that $g_{\mathrm{bit}} < g_{\mathrm{sem}}$. Then, considering the monotonicity of the functions, we have $R_{b}(g_{\mathrm{bit}}) \leq R_{s}(g_{\mathrm{bit}}) < R_{s}(g_{\mathrm{sem}})$, which is a contradiction since the ends of the inequality are equal. Therefore, $g_{\mathrm{sem}} \leq g_{\mathrm{bit}} \leq  g_{\mathrm{max}}$ must hold since $R_{\mathrm{out}} \leq R_{\mathrm{max}} \Rightarrow g_{\mathrm{bit}} \leq g_{\mathrm{max}}$. This result is also highlighted by the cases that correspond to $R_{\mathrm{out}} = R_3$ and $R_{\mathrm{out}} = R_4$ in Fig. \ref{fig:ExplainProb}, where $\tilde{g}_3 < \hat{g}_3 < g_{\mathrm{max}}$ and $\tilde{g}_4 < \hat{g}_4 < g_{\mathrm{max}}$ hold, respectively.

    Now, we can distinguish two possible scenarios regarding $g_{\mathrm{bit}}$.
    \begin{itemize}
        \item If $g_{\mathrm{bit}} \leq g_{\mathrm{min}}$:
        Since $g_{\mathrm{sem}} \leq g_{\mathrm{bit}} \leq g_{\mathrm{min}}$ must hold, for a hybrid user that utilizes SemCom, the condition of the second branch of \eqref{eq:singleSemOut} is satisfied. For the same reasoning, the outage probability of a pure SemCom user is given by the second branch of \eqref{eq:outProbSem} while for BitCom the outage probability is given by $F_{g}\left( g_{\mathrm{bit}}\right)$. Consequently, the third branch of \eqref{eq:hybCombSemOut} holds for the hybrid network, which results in $\Pi_h = F_{g}\left( g_{\mathrm{bit}}\right) = \Pi_{b}$ and $\Pi_h = F_{g}\left( g_{\mathrm{bit}}\right) \leq F_{g}\left( g_{\mathrm{min}}\right) = \Pi_{s}$. 

        \item If $g_{\mathrm{min}} \leq g_{\mathrm{bit}} \leq g_{\mathrm{max}}$: Since $g_{\mathrm{sem}} \leq g_{\mathrm{max}}$ must hold, for a hybrid user that utilizes SemCom, the condition of either the second or third branch of \eqref{eq:singleSemOut} is satisfied. Accordingly, the hybrid user outage probability is given by the fourth or fifth branch of \eqref{eq:hybCombSemOut}, while for pure SemCom it is given by the second or third branch of \eqref{eq:outProbSem}. By combining these, it can be observed that $\Pi_h$ and $\Pi_{s}$ are equal between them, since the discussed branches for each probability have the same condition with respect to $g_{\mathrm{sem}}$ and $g_{\mathrm{min}}$. On the other hand, for BitCom the outage probability is given by $F_{g}\left( g_{\mathrm{bit}} \right)$. Thus, by leveraging the above, we can conclude that        
        $\Pi_h  =  F_{g}\left( \max{\{g_{\mathrm{min}},g_{\mathrm{sem}} \}}\right) = \Pi_{s}$ and $\Pi_h \leq  F_{g}\left( g_{\mathrm{sem}}\right) \leq F_{g}\left( g_{\mathrm{bit}}\right) = \Pi_{b}$.
    \end{itemize}

    \begin{figure}
    \centering    \includegraphics[width=0.9\linewidth]{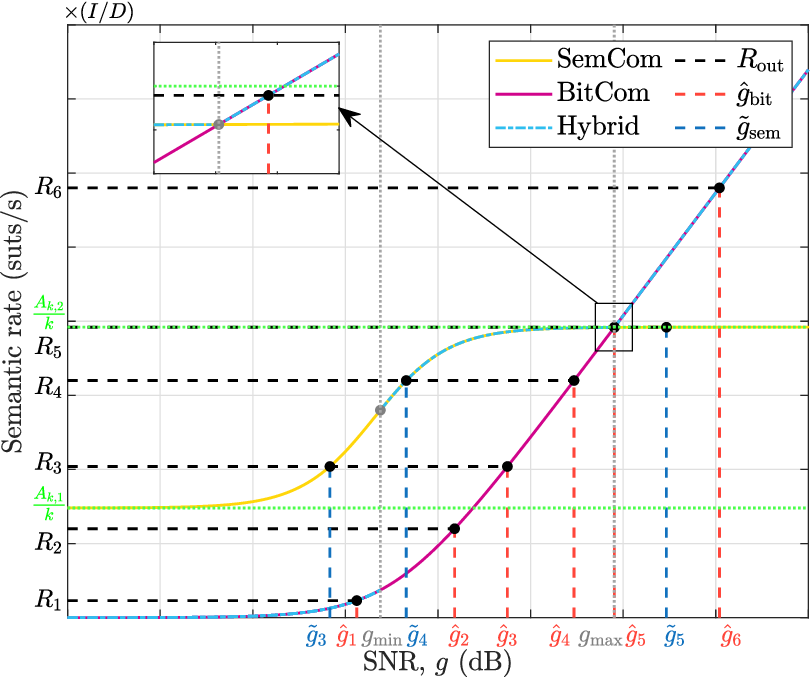}
    \vspace{-2mm}
    \caption{Semantic rate vs SNR. The black dashed lines correspond to outage rate thresholds, $R_{\mathrm{out}} \in \{R_1, \cdots, R_6 \}$. The red and blue dashed lines correspond to the same color SNR values of BitCom, $g_{\mathrm{bit}} \in \{\hat{g}_1, \cdots, \hat{g}_6 \}$, and SemCom, $g_{\mathrm{sem}} \in \{\tilde{g}_3, \cdots, \tilde{g}_5 \}$, respectively, that achieve the rate threshold, $R_{\mathrm{out}}$, of the same index, e.g., $R_{4} = R_{s}(\tilde{g}_4) = R_{b}(\hat{g}_4)$. The green dotted lines highlight the theoretical bounds of semantic rate for SemCom and the gray dotted lines define the interval of semantic utilization, $\mathcal{G}$.}
    \label{fig:ExplainProb}
\end{figure}

    \item If $A_{k,1} \leq kR_{\mathrm{out}} \leq A_{k,2}$ and $R_{\mathrm{max}} < R_{\mathrm{out}}$: For $g_{\mathrm{max}} < g$, it holds that $R_{s}(g) < R_{b}(g)$. Without loss of generality, we assume that $g_{\mathrm{sem}} \leq g_{\mathrm{bit}}$. Then, leveraging the fact that $R_{\mathrm{out}} = R_{b}(g_{\mathrm{bit}}) = R_{s}(g_{\mathrm{sem}})$ and the monotonicity of the functions, we have $R_{s}(g_{\mathrm{sem}}) \leq R_{s}(g_{\mathrm{bit}}) < R_{b}(g_{\mathrm{bit}})$, which is a contradiction since the ends of the inequality are equal. Thus, $g_{\mathrm{max}} < g_{\mathrm{bit}} <  g_{\mathrm{sem}}$ must hold since $R_{\mathrm{max}} < R_{\mathrm{out}} \Rightarrow g_{\mathrm{max}} < g_{\mathrm{bit}}$. This result is also shown by the scenario that corresponds to $R_{\mathrm{out}} = R_5$ in Fig. \ref{fig:ExplainProb}, where $g_{\mathrm{max}} < \hat{g}_5 < \tilde{g}_5$ holds.

    In this case, the outage probability of a hybrid user for BitCom and SemCom are given by the third and fourth branch of \eqref{eq:singleBitOut} and \eqref{eq:singleSemOut}, respectively. Consequently, it holds that $\Pi_h = F_{g}\left( g_{\mathrm{bit}}\right) =  \Pi_{b}$ and $\Pi_h = F_{g}\left( g_{\mathrm{bit}}\right) < F_{g}\left( g_{\mathrm{sem}}\right) = \Pi_{s},$
    since a pure SemCom user's  outage probability is given by the third branch of \eqref{eq:outProbSem}.

    \item If $A_{k,2} \leq kR_{\mathrm{out}}$: 
In this case it is straightforward to see that $\Pi_h = F_{g}\left( g_{\mathrm{bit}}\right) =  \Pi_{b}$. Then, since $\Pi_{s} = 1$, the hybrid network has outage probability smaller than or equal to that of pure BitCom and SemCom networks.
    \end{enumerate} 

Therefore, the hybrid network always achieves smaller outage probability compared to pure BitCom and SemCom networks.
\end{IEEEproof}

\subsection{Generalized Outage Probability}
Although network outage probability is a significant metric that can be useful for designing a network with enhanced reliability, modern communication systems are also interested in achieving good QoS. This metric can be quite insightful since network outage corresponds to extreme network situations that do not usually characterize a system.  Therefore, it is important to quantify the probability of any number of users being in outage. Additionally, driven by this metric, it is possible to design network characteristics in an optimal way according to the desired operational requirements.

For the proposed hybrid network, it is assumed that the number of users in outage, $L_o$, is such that $0 \leq L_{l} \leq L_o \leq L_u \leq L$, where $L_{l}$ and $L_u$ are the smallest and largest number of users that can potentially be in outage, respectively. Then, the probability that exactly $L_o$ users are in outage is given by
\begin{equation}\label{eq:circProb}
    p_{L,L_o}(R) = \binom{L}{L_{o}}  \Pi_{h}^{L_{o}}\left( 1 -  \Pi_{h} \right)^{L-L_{o}},
\end{equation}
where $\binom{\cdot}{\cdot}$ denotes the binomial coefficient.
Using \eqref{eq:circProb}, the probability that all users in $\{L_l, \cdots, L_u \}$ are in outage is expressed as
\begin{align}\label{eq:totProb}
    \mathcal{S}_{L_l,L_u}(R) \!=\! \!
    \sum_{m=L_l}^{L_u} \!\! p_{L,m}(R)
    \!= \!\!\sum_{m=L_l}^{L_u} \!\! \binom{L}{m}\Pi_{h}^{m}\left( 1 -  \Pi_{h} \right)^{L-m}.
\end{align}

Although it is important to calculate the probability of any number of users being in outage, it is particularly interesting to find design parameters that achieve certain probability thresholds for a minimum QoS. Specifically, a network can determine the cell radius to ensure some desired criteria. In this direction, we are interested in calculating the cell radius, $R$, such that $L_l$ users or more will have small probability of being in outage. In this case, $L_u = L$ and using the binomial distribution CDF and \eqref{eq:totProb}, we have
\begin{equation}\label{eq:IrBetaPth}
    \mathcal{S}_{L_l,L}(R) = 1 - \mathcal{S}_{0,L_l-1}(R) = I_{\Pi_{h}}(L_l,L-L_l+1),
\end{equation}
where $I_{\Pi_{h}}(\cdot,\cdot)$ denotes the regularized beta incomplete function at the integral upper limit $\Pi_{h}$ and the fundamental  property $I_{1-\Pi_{h}}\!(L\!-\!L_l\!+\!1,L_l) \!=\! 1\!-\!I_{\Pi_{h}}\!(L_l,L\!-\!L_l\!+\!1)$ has been used. Therefore, for the network outage probability to be up to a desired probability threshold, $P_{\mathrm{th}}$, the following inequality must hold:
\begin{equation}\label{eq:PthIneq}
    I_{\Pi_{h}}(L_l,L-L_l+1) \leq P_{\mathrm{th}}.
\end{equation}
Based on \eqref{eq:PthIneq}, the cell radius must be upper bounded and consequently, it suffices to calculate the maximum radius that satisfies the inequality, which is achieved for the equality case. By taking advantage of the inverse regularized beta incomplete function, the equality of \eqref{eq:PthIneq}can by definition be equivalently expressed as
$\Pi_h = I_{P_{\mathrm{th}}}^{-1}(L_l,L-L_l+1)$. Then, by leveraging \eqref{eq:hybCombSemOut}, the radius for which $P_{\mathrm{th}}$ is achieved can be found by solving
\begin{equation}\label{eq:BetaEq}
    {}_1\!F_1\left(\!\frac{2}{a};1\!+\!\frac{2}{a};-\frac{y_{\mathrm{th}}}{c_L}R^a\right) = u_{\mathrm{th}},
\end{equation}
where $u_{\mathrm{th}} = 1 - I_{P_\mathrm{th}}\!(L_l,L\!-\!L_l\!+\!1) $ is set for convenience and $y_{\mathrm{th}} \in \{ g_{\mathrm{bit}}, g_{\mathrm{min}}, g_{\mathrm{sem}} \}$ is such that $y_{\mathrm{th}}$ denotes the corresponding SNR argument of the CDF as shown in \eqref{eq:hybCombSemOut}. It should be noted that the same way can be used to derive the radius for which the outage probability is below a desired threshold for a BitCom network, where $y_{\mathrm{th}}$ is substituted by $g_{\mathrm{bit}}$, regardless of the operating network parameter values.  
In general, \eqref{eq:BetaEq} does not have a closed-form solution in terms of $R$. Nevertheless, for the case of $a=2$, a solution of \eqref{eq:BetaEq} can be derived as given by the following proposition.

\begin{proposition}\label{lemma:optRadPth}
If there are free space path loss conditions, the cell radius, $R_{\mathrm{th}}^*$, for which $L_l$ or more users are in outage with probability $P_{\mathrm{th}}$, is given by 
\begin{equation}\label{eq:radius_Pth}
    R_{\mathrm{th}}^{*}\! = \! \sqrt{\frac{c_L}{y_{\mathrm{th}}}\left( \frac{1}{u_{\mathrm{th}}} + W_0 \left(-\frac{1}{u_{\mathrm{th}}}\exp\left( -\frac{1}{u_{\mathrm{th}}}\right) \right)\right)},
\end{equation}
where $W_0(\cdot)$ denotes the principal branch of the Lambert W function.
\end{proposition}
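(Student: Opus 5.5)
With $a=2$ I would first reduce the hypergeometric function in \eqref{eq:BetaEq} to elementary functions and then solve for $R$ with the Lambert W function. The first form of \eqref{eq:CDF} gives
\[
{}_1\!F_1\left(1;2;-x\right)=\frac{1}{x}\gamma(1,x)=\frac{1-e^{-x}}{x},
\]
where $x = y_{\mathrm{th}}R^2/c_L$. The second equality uses $\gamma(1,x)=1-e^{-x}$, which can also be checked directly by integrating $\int_0^R (1-e^{-yt^2/c_L})\,2t/R^2\,dt$. So \eqref{eq:BetaEq} becomes the transcendental equation
\[
1-e^{-x} = u_{\mathrm{th}}\,x .
\]

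Next I would put this equation into Lambert form. Substituting $z = x - 1/u_{\mathrm{th}}$ turns $e^{-x} = 1 - u_{\mathrm{th}}x$ into $e^{-z}e^{-1/u_{\mathrm{th}}} = -u_{\mathrm{th}} z$. This rearranges to
\[
z e^{z} = -\frac{1}{u_{\mathrm{th}}}e^{-1/u_{\mathrm{th}}},
\]
so $z = W\!\left(-\frac{1}{u_{\mathrm{th}}}e^{-1/u_{\mathrm{th}}}\right)$ and $x = \frac{1}{u_{\mathrm{th}}} + W\!\left(\cdot\right)$. Taking $R=\sqrt{c_L x / y_{\mathrm{th}}}$ then gives \eqref{eq:radius_Pth}.

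The main obstacle is choosing the branch. Write the argument as $w e^{w}$ with $w=-1/u_{\mathrm{th}}$. Since $u_{\mathrm{th}}\in(0,1)$, we have $w<-1$ and the argument lies in $(-1/e,0)$, so both real branches exist. On $W_{-1}$ one gets $W_{-1}(we^w)=w$, hence $x=0$, which is the trivial root of $1-e^{-x}=u_{\mathrm{th}}x$. The nontrivial root is therefore on the principal branch, $W_0(we^w)\in(-1,0)$, which gives $x = 1/u_{\mathrm{th}} + W_0(\cdot) > 1/u_{\mathrm{th}} - 1 > 0$. Uniqueness of this positive root follows from the concavity of $1-e^{-x}$ with slope $1$ at the origin, together with $u_{\mathrm{th}}<1$: the line $u_{\mathrm{th}}x$ crosses the curve exactly once for $x>0$.

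I would close by noting that $\Pi_h = F_g(y_{\mathrm{th}})$ is increasing in $R$, and $I_{\Pi_h}(L_l,L-L_l+1)$ is increasing in $\Pi_h$. Hence this root is the maximal radius satisfying \eqref{eq:PthIneq}, as claimed.
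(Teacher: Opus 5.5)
Your proposal is correct and follows the paper's route: you reduce ${}_1\!F_1(1;2;-x)$ to $(1-e^{-x})/x$, rewrite $1-e^{-x}=u_{\mathrm{th}}x$ in Lambert form, and take the principal branch (the paper does the same with $z=-x$). Your check that $W_{-1}$ returns the trivial root $x=0$, while $W_0$ gives $x>1/u_{\mathrm{th}}-1>0$, justifies the branch choice more carefully than the paper, which only discards $z=0$ by counting intersections and then writes $W_0$ without further argument.
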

\begin{IEEEproof}
The proof is provided in Appendix \ref{app:proof_lemPth}.
\end{IEEEproof}

Therefore, depending on the QoS that must be satisfied, the cell radius, $R$, of a hybrid network can be selected to satisfy the condition $R \leq R_{\mathrm{th}}^{*}$ in order to ensure that the generalized outage probability is below the desired levels.

\subsection{Semantic Utilization}
As highlighted by the previous analysis, the studied hybrid network can achieve considerably increased performance, both in terms of total semantic rate and outage probability, due to the integration of SemCom. Nevertheless, incorporating the latter into networks can pose several challenges in practice. In more detail, the utilization of SemCom can be computationally challenging for a system since the use of the DeepSC encoder and decoder leads to increased requirements in terms of resources for the weights of the DNN to be stored for usage. Based on this, it is possible that a system can handle up to a certain number of users that can simultaneously utilize semantic transmission. 

On the other hand, semantic transmission can lead to decreased delay and outage, which makes it preferable to use whenever possible. Additionally, the cost for the deployment of specialized hardware to support SemCom should also be considered in the network design. For instance, if only few users would statistically utilize SemCom, the dedicated resources are not efficiently managed since they would not be fully exploited for transmission, and consequently their cost would not result in a notable performance boost. Therefore, to ensure that SemCom is utilized by the system in order to minimize the transmission delay, while also leveraging memory and computational resources as best as possible, a practical system should aim to support semantic utilization within desired levels to take full advantage of its available resources. 

In this direction, semantic utilization is a metric that can capture the number of users that are served by SemCom transmission. With this in mind, the network can be designed in a way that would ensure the efficient and effective management of the dedicated resources by allowing a desired range of users to utilize SemCom. The probability that $L_s$ users are above the required outage rate, $R_{\mathrm{out}}$, and  use SemCom is given by
\begin{equation}\label{eq:semUtil}
    f_{L,L_s}(R) = \binom{L}{L_{s}}  \Pi_{g}^{L_{s}}\left( 1 -  \Pi_{g} \right)^{L-L_{s}},
\end{equation}
where $\Pi_{g} = \mathbb{P}\left(g \in \{ \mathcal{A}_s^c \cap \mathcal{G} \} \right)$ denotes the probability that a user prefers SemCom to BitCom transmission. By leveraging \eqref{eq:CDF}, $\Pi_g$ is derived in closed form by \eqref{eq:probUtil} at the top of the next page.

\begin{figure*}[!t]
\begin{equation}\label{eq:probUtil}
\Pi_{g} = \begin{cases} {}_1\!F_1\!\!\left(\frac{2}{a};1\!+\!\frac{2}{a};\!-\frac{g_{\mathrm{min}}}{c_L}R^a\!\right) \!-\! {}_1\!F_1\!\!\left(\frac{2}{a};1\!+\!\frac{2}{a};\!-\frac{g_{\mathrm{max}}}{c_L}R^a\!\right), & kR_{\mathrm{out}} \leq A_{k,1}, \\
{}_1\!F_1\!\!\left(\frac{2}{a};1\!+\!\frac{2}{a};\!-\frac{g_{\mathrm{min}}}{c_L}R^a\!\right) \!-\! {}_1\!F_1\!\!\left(\frac{2}{a};1\!+\!\frac{2}{a};\!-\frac{g_{\mathrm{max}}}{c_L}R^a\!\right), & g_{\mathrm{sem}} \leq g_{\mathrm{min}} \leq g_{\mathrm{max}}, A_{k,1} \leq kR_{\mathrm{out}} \leq A_{k,2}, \\
{}_1\!F_1\!\!\left(\frac{2}{a};1\!+\!\frac{2}{a};\!-\frac{g_{\mathrm{sem}}}{c_L}R^a\!\right) \!-\! {}_1\!F_1\!\!\left(\frac{2}{a};1\!+\!\frac{2}{a};\!-\frac{g_{\mathrm{max}}}{c_L}R^a\!\right), & g_{\mathrm{min}} \leq g_{\mathrm{sem}} \leq g_{\mathrm{max}}, A_{k,1} \leq kR_{\mathrm{out}} \leq A_{k,2}, \\
0, & g_{\mathrm{min}} \leq g_{\mathrm{max}} \leq g_{\mathrm{sem}}, A_{k,1} \leq kR_{\mathrm{out}} \leq A_{k,2}, \\
0, & A_{k,2} \leq kR_{\mathrm{out}}
\end{cases}
\end{equation}
\hrulefill
\end{figure*}

Consequently, the optimal cell radius for $L_l$ to $L_u$ users to utilize SemCom can be determined by solving $\sum_{m=L_l}^{L_u} df_{L,m}/dR = 0$ or by exploiting the monotonicity of the function within a certain interval. In more detail, taking the derivative of \eqref{eq:circProb} and performing some algebraic manipulations, the optimality conditions are derived by the following proposition.

\begin{proposition}\label{lemma:optRad}
The optimal cell radius for all users in $\{L_l, \cdots, L_u\}$ to utilize SemCom is determined by solving the following optimality condition equations:
\begin{equation}\label{eq:opt_cond}
        \Pi_{g} = \frac{1}{ \!\!1\!+\!\!\left(\!\frac{\binom{L-1}{L_u}}{\binom{L-1}{L_l-1}}\!\right)^{\!\!\frac{1}{L_u-L_l+1}}} \text{ and } \frac{d\Pi_{g}}{dR} = 0,
\end{equation}
where $\frac{d\Pi_{g}}{dR}$ is given by \eqref{eq:probUtilDer} at the top of the next page.
\end{proposition}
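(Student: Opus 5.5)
The plan is to differentiate the objective $\mathcal{F}(R)=\sum_{m=L_l}^{L_u} f_{L,m}(R)$ with respect to $R$ through the chain rule. Since $R$ enters only through $\Pi_g$, we have $d\mathcal{F}/dR = (d\mathcal{F}/d\Pi_g)(d\Pi_g/dR)$. A stationary point therefore requires either $d\Pi_g/dR=0$ or $d\mathcal{F}/d\Pi_g=0$; the first case gives the second condition in \eqref{eq:opt_cond}. The main work is to show that the binomial sum's derivative in $\Pi_g$ telescopes to a two-term expression, from which the first condition follows in closed form.

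\textbf{Step 1: the telescoping derivative.} Write $p=\Pi_g$ and use the standard identity for the derivative of a binomial pmf,
\begin{equation*}
\frac{d}{dp}\binom{L}{m}p^m(1-p)^{L-m} = L\left[\binom{L-1}{m-1}p^{m-1}(1-p)^{L-m} - \binom{L-1}{m}p^{m}(1-p)^{L-1-m}\right].
\end{equation*}
Summing over $m=L_l,\dots,L_u$, consecutive terms cancel, leaving
\begin{equation*}
\frac{d\mathcal{F}}{dp} = L\left[\binom{L-1}{L_l-1}p^{L_l-1}(1-p)^{L-L_l} - \binom{L-1}{L_u}p^{L_u}(1-p)^{L-1-L_u}\right].
\end{equation*}
Equivalently, this is the difference of the beta-density forms of the regularized incomplete beta representation used in \eqref{eq:IrBetaPth}. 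For $p\in(0,1)$, setting this to zero and dividing by $p^{L_l-1}(1-p)^{L-1-L_u}$ gives
\begin{equation*}
\binom{L-1}{L_l-1}(1-p)^{L_u-L_l+1}=\binom{L-1}{L_u}p^{L_u-L_l+1}.
\end{equation*}
Hence $\big((1-p)/p\big)^{L_u-L_l+1}=\binom{L-1}{L_u}/\binom{L-1}{L_l-1}$. Taking the positive $(L_u-L_l+1)$-th root and solving for $p$ yields exactly the first expression in \eqref{eq:opt_cond}.

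\textbf{Step 2: derivative of $\Pi_g$ and the second condition.} Differentiate \eqref{eq:probUtil} branch by branch. Using $\frac{d}{dz}{}_1F_1(b;b+1;-z)=-\frac{b}{b+1}{}_1F_1(b+1;b+2;-z)$ with $b=2/a$ and $z=yR^a/c_L$ (or, more simply, differentiating the $\gamma$-form of \eqref{eq:CDF}), the result is expressed as a difference of terms in $y_1=\max\{g_{\min},g_{\mathrm{sem}}\}$ and $y_2=g_{\max}$. This is \eqref{eq:probUtilDer}. In the branches where $\Pi_g\equiv 0$, the objective is constant and nothing needs to be optimized.

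\textbf{Step 3: interpretation of the maximizer.} $\Pi_g(R)$ vanishes at $R\to0$, since then all users have very high SNR and exceed $g_{\max}$. It also vanishes as $R\to\infty$. So $\Pi_g$ has an interior maximum where $d\Pi_g/dR=0$. Meanwhile, $\mathcal{F}$, viewed as a function of $p$, is unimodal with its maximum at the $p^*$ found in Step 1, because the bracket changes sign exactly once.

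Therefore the optimal $R$ is chosen as follows. If $p^*$ is attainable, i.e. $p^*\le\max_R\Pi_g$, we solve $\Pi_g(R)=p^*$. Otherwise $\mathcal{F}$ is increasing in $\Pi_g$ along the whole attainable range, and the optimum is where $d\Pi_g/dR=0$.

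I expect Step 3 to be the main obstacle. Justifying that these two candidate equations exhaust the optimality conditions requires the unimodality of $\Pi_g(R)$, which is hard to establish rigorously for general $a$. I would first try to get it from the $\gamma$-representation.
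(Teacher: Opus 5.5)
Your proposal is correct and is essentially the paper's proof in Appendix B: factoring $d\Pi_g/dR$ out of each term, telescoping the binomial-pmf derivatives to $L\bigl(f_{L-1,L_l-1}-f_{L-1,L_u}\bigr)$, and dividing out powers of $\Pi_g$ and $1-\Pi_g$ gives the closed-form condition. The cases $L_l=0$ or $L_u=L$, where only $d\Pi_g/dR=0$ survives, follow from the conventions $\binom{L-1}{-1}=\binom{L-1}{L}=0$. Your Step 3 selection rule is sound and goes beyond what the appendix proves, but the unimodality of $\Pi_g(R)$ you flag as the main obstacle is not needed for the statement: the chain-rule factorization already shows that every interior critical point of the objective satisfies one of the two equations.
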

\begin{IEEEproof}
The proof is provided in Appendix \ref{app:proof_lemDer}.
\end{IEEEproof}

Therefore, depending on the network parameters, the resulting semantic utilization of a single user, $\Pi_g$, can be considered as a network curve through which the optimal cell radius to achieve semantic utilization within desired levels can be calculated by solving \eqref{eq:opt_cond}.

\begin{figure*}[t]
\vspace{-1mm}
\begin{equation}\label{eq:probUtilDer}
\frac{d\Pi_{g}}{dR} \!= \!\begin{cases} \! \frac{1}{R} \!\!\left[ \!{}_1\!F_1\!\!\left(\frac{2}{a};\!1\!+\!\frac{2}{a};\!-\frac{g_{\mathrm{max}}}{c_L}R^a\!\right) \!-\!{}_1\!F_1\!\!\left(\frac{2}{a};\!1\!+\!\frac{2}{a};\!-\frac{g_{\mathrm{min}}}{c_L}R^a\!\right) \!-\! \frac{1}{e^{\frac{g_{\mathrm{max}}}{c_L}R^a}} \!+\! \frac{1}{e^{\frac{g_{\mathrm{min}}}{c_L}R^a}} \! \right] \!\!, & \!\!\!\! kR_{\mathrm{out}} \leq A_{k,1}, \\
\! \frac{1}{R} \!\!\left[ \!{}_1\!F_1\!\!\left(\frac{2}{a};\!1\!+\!\frac{2}{a};\!-\frac{g_{\mathrm{max}}}{c_L}R^a\!\right) \!-\! {}_1\!F_1\!\!\left(\frac{2}{a};\!1\!+\!\frac{2}{a};\!-\frac{g_{\mathrm{min}}}{c_L}R^a\!\right) \!-\! \frac{1}{e^{\frac{g_{\mathrm{max}}}{c_L}R^a}} \!+\! \frac{1}{e^{\frac{g_{\mathrm{min}}}{c_L}R^a}} \! \right] \!\!, & \!\!\!\! g_{\mathrm{sem}} \! \leq \!  g_{\mathrm{min}} \! \leq \! g_{\mathrm{max}}, A_{k,1} \! \leq \! kR_{\mathrm{out}} \! \leq \! A_{k,2}, \\
\! \frac{1}{R} \!\!\left[ \!{}_1\!F_1\!\!\left(\frac{2}{a};\!1\!+\!\frac{2}{a};\!-\frac{g_{\mathrm{max}}}{c_L}R^a\!\right) \!-\! {}_1\!F_1\!\!\left(\frac{2}{a};\!1\!+\!\frac{2}{a};\!-\frac{g_{\mathrm{sem}}}{c_L}R^a\!\right) \!-\! \frac{1}{e^{\frac{g_{\mathrm{max}}}{c_L}R^a}} \!+\! \frac{1}{e^{\frac{g_{\mathrm{sem}}}{c_L}R^a}} \! \right]\!\!, & \!\!\!\! g_{\mathrm{min}} \! \leq \! g_{\mathrm{sem}} \! \leq \! g_{\mathrm{max}}, A_{k,1} \! \leq \! kR_{\mathrm{out}} \! \leq \! A_{k,2}, \\
0, & \!\!\!\! g_{\mathrm{min}} \! \leq \! g_{\mathrm{max}} \! \leq \! g_{\mathrm{sem}}, A_{k,1} \! \leq \! kR_{\mathrm{out}} \! \leq \! A_{k,2}, \\
0, &\!\!\!\!  A_{k,2} \leq kR_{\mathrm{out}}
\end{cases}
\end{equation}
\hrulefill
\end{figure*}

\section{Simulation Results And Discussion}\label{sec:numerical}
In this section, the simulation results along with the derived theoretical results are presented. For the results, Monte Carlo simulations have been performed over $10^8$ channel and user location realizations. Unless otherwise stated, the values of the parameters are given in Table \ref{table:tableParameters}. Moreover, regarding DeepSC, the curve-fitting constants of semantic similarity are set equal to $A_{k,1} = 0.37, A_{k,2} = 0.98, C_{k,1} = 0.2525, C_{k,2} = -0.7895$ as given in \cite{SemNOMA}. In the following figures, the theoretical results are illustrated by colored lines, while the numerical results are illustrated by marks of the same color.

\begin{table}[t] 
    \centering
    \caption{Simulation Parameters}
    \label{table:tableParameters}
    \begin{tabular}{c|c}
    \textbf{Parameter} & \textbf{Value (Unit)} \\
    \hline \hline
    Noise power spectral density, $N_0$ & $-174$ dBm/Hz \\
    \hline
    User bandwidth, $W$ & $20$ MHz \\
    \hline
    Frequency, $f_c$ & $2.4$ GHz \\
    \hline
    Transmit Power per user, $\mathcal{P}$ & $1$ W \\
    \hline
    Path loss exponent, $a$ & $3$ \\
    \hline
    Semantic symbols, $k$ & $5$ symbols/word \\
    \hline
    Bit symbols, $\mu$ & $40$ symbols/word \\
    \hline
    Similarity threshold, $M^{\mathrm{th}}$ & $0.75$ \\
    \hline
    Number of users, $L$ & $30$ \\
    \hline
    Outage rate threshold, $R_{\mathrm{out}}$ & $0.04$ suts/s\\
    \hline
    BitCom BER, $\mathrm{BER}$ & $10^{-3}$ \\
    \end{tabular} 
\end{table}

\begin{figure}
    \centering
    \includegraphics[width=0.9\linewidth]{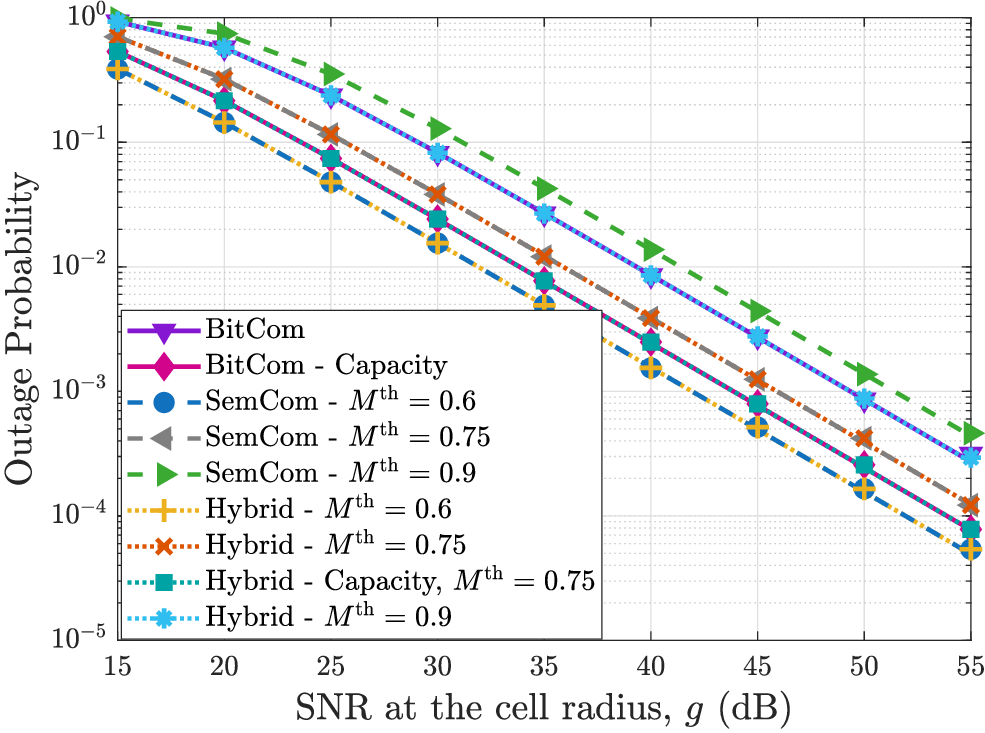}
    \caption{Probability of at least one user being in outage for all types of networks and varying similarity threshold $M^{\mathrm{th}}$. For the BitCom and hybrid network, the $\mathrm{BER}$ value of the bit users is set to that of Table \ref{table:tableParameters} except for the capacity curves, where $\Gamma = 1$ and no error occurs for bit transmission.}
    \label{fig:SNR_SimConstraints}
\end{figure}

In Fig. \ref{fig:SNR_SimConstraints}, the network outage probability of at least one user being in outage is shown. As can be observed, the increasing value of the similarity threshold, $M^{\mathrm{th}}$, does not affect the performance of BitCom as $g_{\mathrm{bit}}$ remains unchanged. On the other hand, the performance of SemCom worsens as $g_{\mathrm{min}}$ increases. While SemCom outperforms BitCom for small and mid-range values of $M^{\mathrm{th}}$, the opposite behavior occurs for high values of the latter. This behavior can be explained by considering the proof of Theorem \ref{th:NetworkPerf}. In more detail, as explained in Theorem \ref{th:NetworkPerf}, when $R_{\mathrm{out}} < R_{\mathrm{max}}$, if BitCom can be used to serve a user, SemCom can also serve the same user, except in scenarios that $g_{\mathrm{bit}} \leq g_{\mathrm{min}}$, where BitCom is preferable. As a result, the hybrid network utilizes the same transmission method for all users, resulting in performance similar to the best of the other two networks. It is also significant to highlight the effect of the factor $\Gamma$ in the BitCom network. Specifically, for the case that the capacity is achieved, $g_{\mathrm{bit}}$ is considerably smaller than its corresponding value when $\mathrm{BER}=10^{-3}$ is considered for the uncoded scheme, which results in improved outage probability. In fact, in this case, BitCom outperforms SemCom even for the mid-range similarity threshold of $M^{\mathrm{th}}=0.75$. Therefore, the achievable rate of BitCom transmission is expected to affect the performance of both BitCom and hybrid networks.   

\begin{figure}
    \centering
    \includegraphics[width=0.9\linewidth]{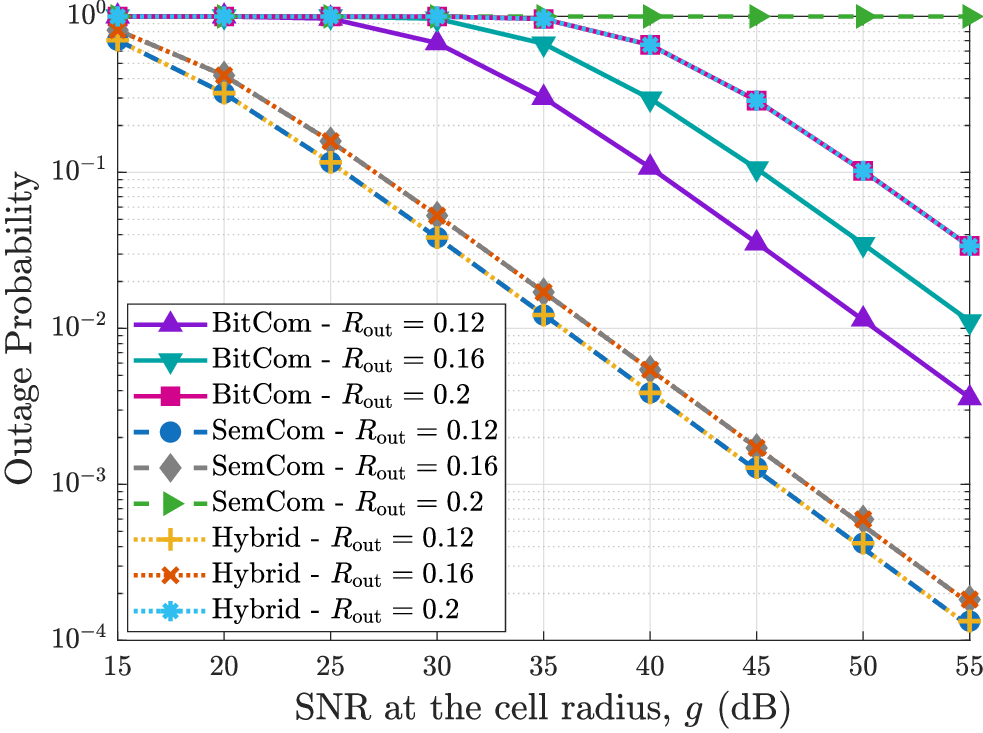}
    \caption{Probability of at least one user being in outage for all types of networks and varying outage rate threshold $R_{\mathrm{out}}$.}
    \label{fig:SNR_Ra}
\end{figure}

In Fig. \ref{fig:SNR_Ra}, the outage probability is plotted versus the received SNR at the cell radius for varying outage rate thresholds and all types of networks. Similar to Fig. \ref{fig:SNR_SimConstraints}, it is observed that the hybrid network achieves the performance of the best of the other two types of networks for each case. As illustrated, the performance of BitCom for varying $R_{\mathrm{out}}$ changes since $g_{\mathrm{bit}}$ also varies. Furthermore, it is important to note that up to mid-range values of $R_{\mathrm{out}}$, SemCom outperforms BitCom by more than one order of magnitude. However, SemCom cannot be selected for high outage rate threshold values because its rate saturates below the desired thresholds, as also illustrated in Fig. \ref{fig:ExplainProb}, resulting in continuous outage. Therefore, it is evident that the hybrid network is crucial for achieving the best possible performance and avoiding complete network outage in extreme cases. 

\begin{figure}
    \centering
    \includegraphics[width=0.9\linewidth]{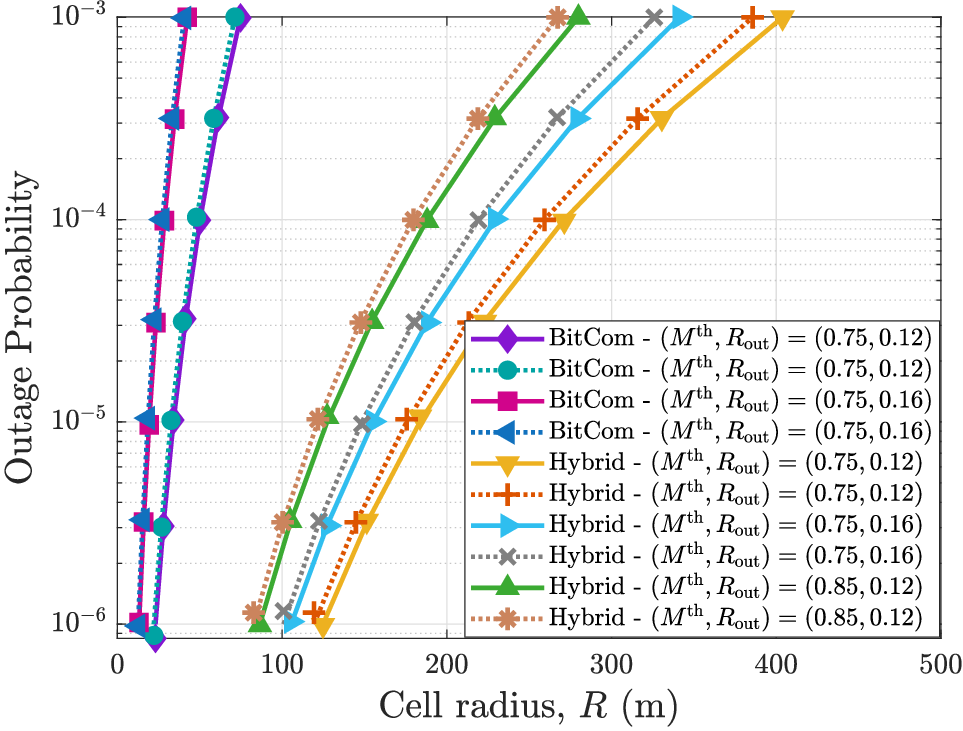}
    \caption{Probability of $L_l=3$ or more users being in outage for increasing cell size and varying similarity and outage rate thresholds, when $a=2$, $\mathcal{P} = 1mW$. The continuous and dotted lines symbolize $L=10$ and $L=50$ users, respectively.}
    \label{fig:R_Ra_SimConstraints}
\end{figure}

In Fig. \ref{fig:R_Ra_SimConstraints}, the probability of $3$ or more users being in outage is presented for a varying number of users in the cell and network parameters for the BitCom and hybrid network. As can be seen, the simulation results obtained by Lemma \ref{lemma:optRadPth} and \eqref{eq:radius_Pth} accurately calculate the cell radius that should be used to achieve a desired outage rate threshold. It is also important to highlight how the number of users and network parameter values affect the calculated cell radius. Specifically, as $L$ increases, the cell radius decreases, which can be expected, as a larger radius would result in more users being in outage. Moreover, increasing the outage rate threshold from $R_{\mathrm{out}}=0.12$ to $R_{\mathrm{out}}=0.16$ leads to a smaller cell radius and a  performance decrease of approximately half an order of magnitude for the hybrid network and one order of magnitude for BitCom, respectively. In addition, increasing the similarity threshold results in an even smaller cell radius, which is attributed to the fact that a large value of $M^{\mathrm{th}}$ leads to a greater value of $g_{\mathrm{sem}}$ compared to $g_{\mathrm{min}}$ for a specific outage rate threshold. Thus, by taking advantage of \eqref{eq:SNRmin} and \eqref{eq:SNRoutSem}, we can compare $M^{\mathrm{th}}$ and $kR_{\mathrm{out}}$ to estimate which of the two network parameters determines the cell radius. Consequently, the theoretical analysis of Section \ref{sec:perfAnalysis} allows us to correctly design the cell radius according to the desired network parameters.

\begin{figure}
    \centering
    \includegraphics[width=0.9\linewidth]{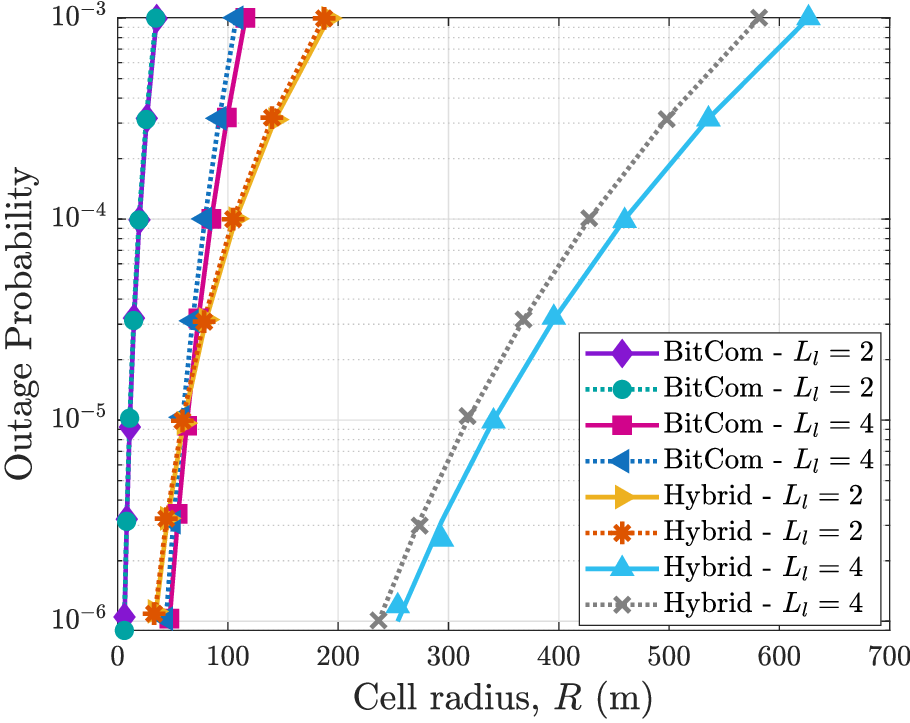}
    \caption{Probability of $L_l = 2, L_l = 4$ or more users being in outage for increasing cell size and outage rate threshold $R_{\mathrm{out}}=0.12$, when $a=2$, $\mathcal{P} = 1mW$. The continuous and dotted lines symbolize $L=10$ and $L=50$ users, respectively.}
    \label{fig:R_Ll}
\end{figure}

Fig. \ref{fig:R_Ll} illustrates the generalized outage probability studied for two scenarios, $L_l = 2$ and $L_l = 4$. It is important to note that, although the number of users in the cell, $L$, affects the radius that should be used, the difference in radius for varying values of $L$ is only a few meters. However, the cell radius is primarily affected by the smallest desired number of users that can be in outage, i.e., $L_l$. In more detail, the smaller the value of $L_l$, the smaller the cell radius should be to ensure the desired outage probability levels. In addition, the studied hybrid network outperforms the conventional BitCom network, thus utilizing it can increase spectral efficiency because it can support larger cells than current networks. This characteristic is most pronounced for larger values of $L_l$, where the cell radius that the hybrid network can select is greater than that of BitCom by more than $200$m for stricter outage requirements ($10^{-6}$) and more than $500$m for relatively less strict outage requirements ($10^{-3}$). Nevertheless, even with a more reliable network design characterized by smaller $L_l$ values, the hybrid network can use a cell radius three times larger than conventional BitCom.

\begin{figure}
    \centering
    \includegraphics[width=0.9\linewidth]{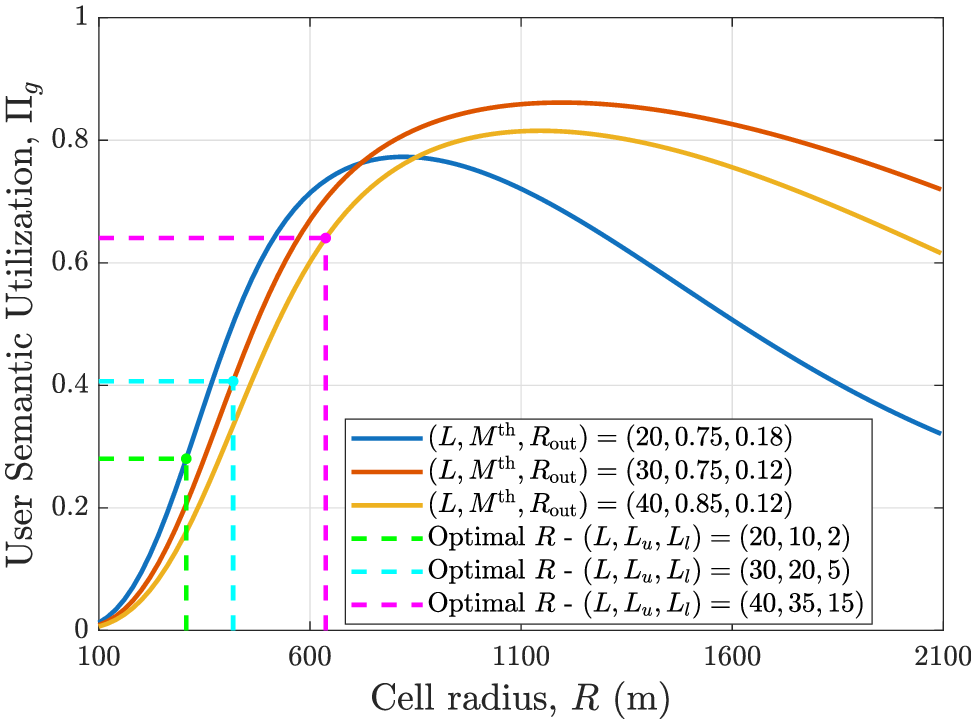}
    \caption{Semantic utilization probability of a user in a cell of $L$ users. for varying similarity and outage rate threshold.}
    \label{fig:R_Pg}
\end{figure}

In Fig. \ref{fig:R_Pg}, the probability of semantic utilization for a user in a cell that contains $L$ users is illustrated. Using Lemma \ref{lemma:optRad}, we calculate the optimal cell radius that achieves semantic utilization within a desired range $\{L_l , \cdots, L_u \}$. As can be observed, each network curve $\Pi_g$ exhibits the same behavior, increasing up to a maximum and then decreasing. It is important to emphasize that the equations given by \eqref{eq:opt_cond} can be used to calculate the cell radius for any combination of network parameters and $L_s, L_l, L_u$. It is evident that if the value described by the right hand-side of the equation containing $\Pi_g$ in \eqref{eq:opt_cond} cannot be obtained, then only the derivative equation can yield the minimum of the semantic utilization. It should also be noted that the first equation of \eqref{eq:opt_cond} can have two distinct roots, but the second one is not depicted here, since it corresponds to a larger cell size that would lead to smaller achievable rates within the cell. Nevertheless, the system could select the second root for the radius, if semantic utilization was the primary concern of the system, which would lead to higher spectral efficiency, at the expense of semantic rate for all users.

\begin{figure}
    \centering
    \includegraphics[width=0.9\linewidth]{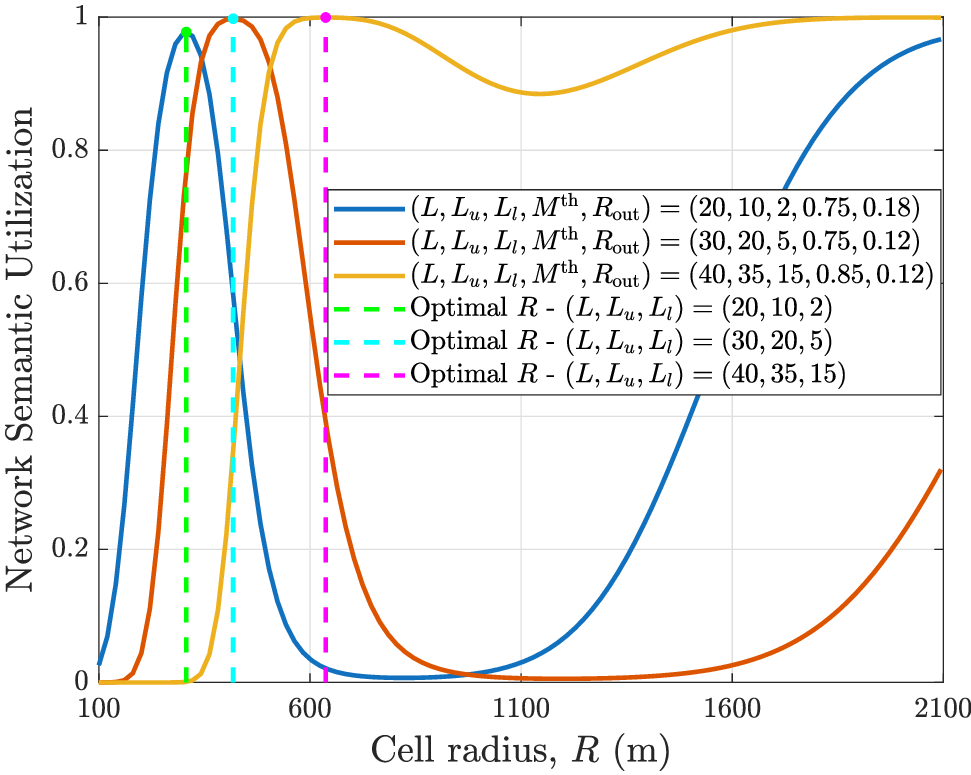}
    \caption{Semantic utilization probability for $L_l$ to $L_u$ users in a cell of $L$ users for varying similarity and outage rate threshold.}
    \label{fig:R_SemUtil}
\end{figure}

Fig. \ref{fig:R_SemUtil} presents the semantic utilization for $L_l$ to $L_u$ users in a cell of $L$ users. As highlighted, the radius calculated by equation \eqref{eq:opt_cond} achieves the maximum semantic utilization for the desired range of users. Moreover, all curves follow the expected behavior, exhibiting their minimum at the corresponding maximum of $\Pi_g$ as indicated in Fig. \ref{fig:R_Pg}, and increasing up to the second point of maximization which corresponds to the second root of the first equation given in \eqref{eq:opt_cond}. It should be noted that the semantic utilization showcases a smoother behavior near its maximum as $L$ increases. Although this can be affected by the selection of $L_l, L_u$, it is expected that, in a cell containing a large number of users, a small change in radius may cause a user to not utilize SemCom transmission, however the overall utilization would remain almost unchanged when a wide range $\{L_l, \cdots, L_u \}$ is desired. Therefore, even in a resource-constrained system that cannot serve all users through SemCom, an optimal radius can be found that allows the system to use SemCom within a desired range. This ensures that the system takes full advantage of its computational and memory resources.


\section{Conclusions}\label{sec:conc}
In this work, the performance of a hybrid BitCom-SemCom network was considered in terms of outage probability. To enable a more flexible network structure that can tolerate a certain number of users being in outage, a generalized outage probability was introduced and examined. Furthermore, the cell radius that must be used to achieve a desired outage rate threshold while simultaneously satisfying a common similarity threshold was determined,  and a closed-form expression was derived for a specific path loss exponent. Moreover, considering the substantial memory and computational resources required by a network to serve a large number of users as well as the cost of the required specialized hardware that must be deployed, a semantic utilization metric was proposed. Leveraging this metric allows the network to select a cell radius that maximizes the probability of using SemCom for a desired range of users, which in practice is dictated by the amount of available dedicated resources. Thus, the resources can be fully exploited to boost the performance of the hybrid network. Simulation results validated the theoretical analysis and provided important insights for designing a hybrid cellular network tailored to specific requirements. Based on these insights, a reliable and spectral-efficient hybrid network can be designed by selecting a cell radius that fully satisfies the network's requirements or achieves an acceptable trade-off between reliability and resource management.

\appendices
\section{Proof of Proposition \ref{lemma:optRadPth}}\label{app:proof_lemPth}
By setting $z=-(y_{\mathrm{th}}R^2) / c_L$ and leveraging the fact that ${}_1\!F_{1}\left(1; 2; z \right) = (e^z - 1) / z$, \eqref{eq:BetaEq} can be rewritten as 
\begin{equation}\label{eq:a2hyp}
 e^z - 1 = u_{\mathrm{th}}z ,
\end{equation}
where the first equation is equivalent to the second one when $z \neq 0$. It can be observed that \eqref{eq:a2hyp} has only one feasible solution, since the line described by the right-hand side of the equation intercepts the exponential curve at exactly two points, one of which is $z=0$ and thus is discarded. Then, by performing algebraic manipulations, we can equivalently write \eqref{eq:a2hyp} as
\begin{align}\label{eq:a2W}
    e^{-z} = \frac{1}{u_{\mathrm{th}}z+1} &\Rightarrow \left(u_{\mathrm{th}}z+1\right)e^{\left(-z-\frac{1}{u_{\mathrm{th}}}\right)} = e^{-\frac{1}{u_{\mathrm{th}}}} \nonumber \\ 
    &\!\!\!\!\!\!\!\!\Rightarrow -\!\left(\!z\!+\!\frac{1}{u_{\mathrm{th}}}\!\right)e^{-\left(z+\frac{1}{u_{\mathrm{th}}}\right)} \!=\! -\frac{e^{-\frac{1}{u_{\mathrm{th}}}}}{u_{\mathrm{th}}} \nonumber \\
    &\!\!\!\!\!\!\!\!\Rightarrow -z-\!\frac{1}{u_{\mathrm{th}}} \!=\! W_0 \!\left(\! -\frac{1}{u_{\mathrm{th}}} \exp \!\left(\! -\frac{1}{u_{\mathrm{th}}}\right) \!\right)\!.
\end{align}
Substituting $z$ into the last equation, the cell radius, $R_{\mathrm{th}}^*$, that satisfies \eqref{eq:a2W} can be found to be given by \eqref{eq:radius_Pth}, which completes the proof. 

\section{Proof of Proposition \ref{lemma:optRad}}\label{app:proof_lemDer}
By leveraging the product derivative rule and performing algebraic manipulations on binomials, the following recursive formula for $df_{L,m}/dR$ can be derived 
\begin{align}\label{eq:der_rec}
    \frac{df_{L,m}}{dR} &= \frac{d\Pi_{g}}{dR} L\binom{L-1}{m-1}\Pi_{g}^{m-1}(1-\Pi_{g})^{L-m} \nonumber \\
    &\quad- \frac{d\Pi_{g}}{dR}L\binom{L-1}{m}\Pi_{g}^{m}(1-\Pi_{g})^{L-m-1} \nonumber \\
    &= \frac{d\Pi_{g}}{dR}L\left( f_{L-1,m-1}(R) - f_{L-1,m}(R)\right),
\end{align}
when $1 \leq m \leq L-1$ must hold for the exponents in \eqref{eq:circProb} to be non-zero. For the extreme cases, $m = 0$ and $m = L$, similar expressions can be found and the formula is reduced to 
\begin{align}\label{eq:extCase}
    \frac{df_{L,m}}{dR}=\begin{cases}
    -\frac{d\Pi_{g}}{dR}Lf_{L-1,0}(R), &\quad m=0\\
    \frac{d\Pi_{g}}{dR}Lf_{L-1,L-1}(R), &\quad m=L .
    \end{cases}
\end{align}
Taking into account both \eqref{eq:der_rec} and \eqref{eq:extCase}, we have
\begin{align}\label{eq:telescope_sum1}
    \sum_{m=L_l}^{L_u} & \frac{df_{L,m}}{dR} \nonumber \\
    &\!\!\!\!=\begin{cases}
    \!\frac{d\Pi_{g}}{dR}L\!\left(f_{\!L\!-\!1,L_l-1}(R) \!-\! f_{\!L\!-\!1,L_u}(R) \right)\!,& \!\!\!\! L > L_u ,  L_l  > 0,\\
    \!\frac{d\Pi_{g}}{dR}Lf_{L-1,L_l-1}(R),& \!\!\!\!L_u=L, L_l>0, \\
    \!-\frac{d\Pi_{g}}{dR}Lf_{L-1,L_u}(R),& \!\!\!\!L > L_u, L_l=0,
    \end{cases}
\end{align}
since terms cancel out in the telescopic sum. Clearly, for the cases of $L_l=0$ and $L_u=L$, the extreme values can be achieved only when $d\Pi_{g}/dR = 0$, the solutions of which can be numerically found. Note that $d\Pi_{g}/dR$ can be explicitly derived by leveraging that $d\gamma\left( \frac{2}{a},\frac{y}{c_L}R^a\right) / dR = aR^{a-1}\left( \frac{y}{c_L}R^a \right)^{\frac{2}{a}-1} e^{-\frac{y}{c_L}R^a}$ along with fundamental properties of $\gamma(\cdot, \cdot)$, which results in \eqref{eq:probUtilDer}. 
For the general case, described by the first branch of \eqref{eq:telescope_sum1}, the condition $f_{L-1,L_l-1}(R) = f_{L-1,L_u}(R)$ can also lead to local extrema. By dividing both sides with $\Pi_{g}^{L_u}(1\!-\!\Pi_{g})^{L-L_u-1}$, the aforementioned equation can be reduced to the first equality in \eqref{eq:opt_cond}, which completes the proof.

\bibliographystyle{IEEEtran}
\bibliography{MyRefs}

\end{document}